\documentclass[letterpaper,twocolumn,10pt]{article}
\PassOptionsToPackage{dvipsnames}{xcolor}
\usepackage{xurl}
\usepackage{usenix-2020-09}

\usepackage{algorithm}
\usepackage{algorithmicx}
\usepackage{amsmath}
\usepackage{amsthm}
      
\usepackage{amssymb}
\usepackage{array}
\usepackage{arydshln}
\usepackage[noend]{algpseudocode}
\usepackage{boxedminipage}
\usepackage{booktabs}
\usepackage{enumitem}
\usepackage{etoolbox}
\usepackage{float}
\usepackage{graphicx}
\usepackage{hyperref}
\usepackage{listings}
\usepackage{makecell}
\usepackage{multirow}
\usepackage{pifont}
\usepackage{setspace}
\usepackage{textcomp}
\usepackage{threeparttable}
\usepackage{url}
\usepackage{subcaption}
\usepackage{tikz,balance}
\usepackage[dvipsnames]{xcolor}
\usepackage{xspace}
\usepackage{cleveref}
\Crefformat{section}{\S~#2#1#3} 
\Crefname{appendix}{Appendix}{Appendices} 
\usepackage[normalem]{ulem}

\usetikzlibrary{shapes.geometric, arrows.meta, positioning, calc, backgrounds, fit}

\hypersetup{
	colorlinks=true,
	linkcolor=ForestGreen,
	citecolor=blue,
	filecolor=magenta,
	urlcolor=.,
	pdftitle={Overleaf Example},
	pdfpagemode=FullScreen,
}

\makeatletter
\patchcmd{\@makecaption}
{\scshape}
{}
{}
{}
\makeatother

\makeatletter
\newcommand{\removelatexerror}{\let\@latex@error\@gobble}

\theoremstyle{plain}

\newtheorem{definition}{Definition}
\newtheorem{theorem}{Theorem}
\newtheorem{lemma}{Lemma}
\newtheorem{corollary}{Corollary}
\newtheorem{proposition}{Proposition}
\algnewcommand\algorithmicforeach{\textbf{for each}}
\algdef{S}[FOR]{ForEach}[1]{\algorithmicforeach\ #1\ \algorithmicdo}

\algnewcommand\algorithmicinput{\textbf{INPUT:}}
\algnewcommand\INPUT{\item[\algorithmicinput]}
\algnewcommand\algorithmicoutput{\textbf{OUTPUT:}}
\algnewcommand\OUTPUT{\item[\algorithmicoutput]}
\algnewcommand\algorithmicassume{\textbf{ASSUME:}}
\algnewcommand\ASSUME{\item[\algorithmicassume]}

\setlist[itemize]{leftmargin=*,itemsep=1pt,topsep=2pt,parsep=0pt}
\setlist[enumerate]{leftmargin=*,itemsep=1pt,topsep=2pt,parsep=0pt}

\newcommand{\ignore}[1]{}

\algrenewcommand\textproc{}
\algdef{SE}[EVENT]{Event}{EndEvent}[1]{\textbf{upon}\ #1\ \algorithmicdo}{\algorithmicend\ \textbf{event}}%
\algtext*{EndEvent}
\newcommand{\para}[1]{\vskip0.0em\noindent\textbf{#1}}

\definecolor{yescolor}{HTML}{026378}

\makeatletter
\renewcommand{\ALG@name}{}
\makeatother

\definecolor{crownGold}{HTML}{FFC107}

\newcommand{\tightdisplay}[1]{%
  \begingroup
  \setlength{\abovedisplayskip}{4pt}%
  \setlength{\belowdisplayskip}{4pt}%
  \setlength{\abovedisplayshortskip}{2pt}%
  \setlength{\belowdisplayshortskip}{2pt}%
  \[
    #1
  \]
  \endgroup
}

\newif\ifpublish
\publishtrue

\newcommand{\sysname}{AgentValve\xspace}
\newcommand{\scname}{AgentValve\xspace}
\newcommand{\network}{\mathcal{G}\xspace}

\ifpublish
    \author{
        \textnormal{Jianting Zhang} \\
        \textnormal{Purdue University} \\
        \textnormal{zhan4674@purdue.edu}
        \and
        \textnormal{Zhongtang Luo} \\
        \textnormal{Google LLC} \\
        \textnormal{zhtluo@google.com}
        \and
        \textnormal{Aniket Kate} \\
        \textnormal{Mysten Labs / Purdue University} \\
        \textnormal{aniket@purdue.edu}
    }
\fi

\date{}

\title{\Large \bf \sysname: Sybil-Resilient Payment Authorization for Permissionless AI Markets}

\begin{document}

\maketitle

\begin{abstract}
AI agents increasingly select and pay service providers without human review. In emerging permissionless markets, however, adversaries can create inexpensive identities and fabricate feedback to inflate provider reputation, making reputation-based payment authorization vulnerable to Sybil attacks and exposing clients to unbounded financial loss.

This paper presents \sysname, a protocol for Sybil-resilient payment authorization with bounded exposure in permissionless AI markets. \sysname builds on a trust network that represents reputation as finite capacity in a directed graph and authorizes payments through trust paths with sufficient available capacity. In \sysname, each admitted payment atomically reserves the capacity, with which the honest agents' payment exposure is bounded by the capacity they extend into the adversarial region. As a result, creating identities or interactions entirely within the adversarial region cannot increase this bound, making the payment authorization Sybil-resilient. To limit capacity depletion by malicious senders, \sysname combines payment-backed capacity creation with bounded route length, preventing attackers from depleting honest capacity without incurring corresponding payment costs. \sysname further decouples off-chain route discovery from on-chain verification and introduces a witness-commit mechanism that reduces route verification to linear work while requiring only constant persistent state per admitted payment.

We implement a \sysname prototype and evaluate it using six months of Ethereum data and synthetic workloads. Experimental results show that under attacks with up to 10,000 synthetic Sybil identities and with chosen attackers, the payment exposure and capacity depletion remain bounded. Regarding practicality, the witness-commit mechanism lowers admission cost by up to 40.8\% compared with baselines, and admitting common payments costs approximately \$0.04--\$0.21.
\end{abstract}

\section{Introduction} \label{sec-intro}
Large language models (LLMs) enable increasingly capable artificial intelligence (AI) agents to operate in service markets, autonomously selecting and paying service providers without human review of individual counterparties.
This emerging ecosystem is already seeing substantial adoption: A2A has gained support from more than 100 technology companies~\cite{a2a_linux}, x402 processed over 160 million agentic payments in the year preceding June 2026~\cite{x402_160m}, and more than 550,000 AI-agent identities have been registered in blockchain-based AI markets by October 2026~\cite{8004-agents}.
Emerging protocols, including A2A~\cite{a2a}, x402~\cite{x402}, and ERC-8004~\cite{erc8004}, are driving AI service markets toward a \emph{permissionless} model that promotes efficiency and autonomy, allowing agents to register and transact freely without authorization from a central operator. 

Unfortunately, permissionless AI markets introduce a significant attack surface for reputation manipulation. An adversary can create many inexpensive identities under its control, known as Sybils, and coordinate their feedback to inflate a provider's reputation~\cite{douceur2002sybil,yu2006sybilguard,yu2008sybillimit,danezis2009sybilinfer}. As a result, reputation-based service provider selection and payment authorization become vulnerable to Sybil attacks. More seriously, if payment amounts scale with a provider's reputation, an adversary can exploit reputation manipulation to induce unbounded financial losses for clients.
This is not merely a hypothetical concern. ERC-8004, a standard protocol for agent registration and reputation maintenance in permissionless AI markets, explicitly identifies Sybil manipulation as a security risk~\cite{erc8004}. A recent empirical study~\cite{xiong2026can} flags 59.2--90.6\% of reviewers across three ERC-8004 chains as participants in coordinated Sybil campaigns. Moreover, across these chains, the median cost of manipulating a reputation score past the study's trust threshold is only \$0.0027--\$0.055.

\para{This paper.}
We present \sysname (\Cref{sec-protocol}), a protocol for Sybil-resilient payment authorization in permissionless AI markets.
The core idea behind \sysname is to derive the reputation (or trustworthiness) of agent providers from a \emph{trust network}~\cite{kate2016introduction,moreno2018mind}.
Specifically, building on Bazaar's personalized-flow principle for limiting marketplace fraud~\cite{post2011bazaar}, \sysname models trust as a weighted, directed graph in which each edge has a finite capacity representing the amount of trust one agent extends to another. Payment authorization is then governed by these capacities: a payment is admitted only if the available capacities along paths from the sender to the receiver can collectively support the payment amount.

In such a trust network, every path from an honest sender to an adversarial receiver must cross an edge from the honest region to the adversarial region. These boundary edges constitute a directed \emph{cut}. At any time, \sysname bounds honest agents' aggregate payment exposure (i.e., the total value their clients risk losing to adversarial providers) by the total capacity of this cut, a property we call \emph{cut-bounded exposure} (\Cref{def:cut-bounded-exposure}).
This bound makes payment authorization \emph{Sybil-resilient}: creating identities or inflating capacities entirely within the adversarial region cannot enlarge the bound without additional capacity crossing the cut.

\para{Challenges and approaches.}
Achieving Sybil-resistant payment authorization and cut-bounded exposure in a permissionless AI market introduces three system challenges.

First, \emph{coordinating concurrent payment authorization} (\textbf{C1}). In the AI market, multiple concurrent payments may contend for the same edge capacity, with aggregate value exceeding what is available. Admitting them without accounting for capacity usage violates cut-bounded exposure, while permanently consuming capacity under a first-come-first-served policy prevents reuse after successful interactions and reduces admission opportunities.

\sysname addresses this challenge through \emph{atomic route reservation} and \emph{asynchronous feedback-based resolution}. Following the total order established by the blockchain, a smart contract atomically verifies that each payment route has sufficient available capacity, transfers the payment, and reserves the corresponding capacity on every edge along the route. Reserved capacity is temporarily unavailable to other payments and is later resolved asynchronously based on feedback from the sender. Positive or neutral feedback releases the reservation for future use, whereas negative feedback permanently consumes the reserved capacity. This mechanism preserves the cut-bounded exposure guarantee while allowing capacity associated with successful payments to be safely reused by subsequent payments.

Second, \emph{bounding capacity depletion by malicious senders} (\textbf{C2}). The same reservation discipline exposes a complementary attack. An adversary acting as a buyer can route payments through honest agents and deliberately issue negative feedback, consuming capacity that would otherwise support honest payments. 

\sysname combines an \emph{interaction-backed capacity discipline} with a \emph{bounded-stretch routing rule}. Capacity entering the honest region from an adversarial sender arises only after positive sender feedback on a payment to an honest receiver. Receiving a payment creates no capacity in the reverse direction. Consequently, an adversary must pay honest receivers to establish capacity entering the honest region, and negative feedback that depletes downstream honest capacity also consumes this ingress capacity. Moreover, the routing rule limits each path to $H_{\max}$ edges. A path from an adversarial sender that depletes honest capacity must first cross into the honest region. Each unit of consumed ingress can therefore deplete at most $H_{\max}-1$ units of capacity on edges between honest agents. We prove that cumulative depletion is at most $H_{\max}-1$ times the cumulative adversarial-to-honest payment value, ensuring a property called \emph{bounded depletion amplification} (\Cref{sec-property-goal}).

Third, \emph{verifying and resolving payments with bounded on-chain cost} (\textbf{C3}). In a trust network, agents need to find paths (called \emph {routes}) to support their payments. However, on-chain execution is replicated and metered, making graph-wide path discovery expensive. Moving discovery off-chain introduces untrusted and potentially stale routes. Moreover, the on-chain contract must verify the validity of the route and preserve necessary information for payment resolution. However, storing every payment's per-edge reservation records adds persistent storage costs that grow with route size.

To address \textbf{C3}, \sysname employs a \emph{compute--verify decomposition} and a \emph{witness-commit mechanism}. Specifically, \sysname allows independently operated pathfinders to discover routes off-chain, while the on-chain contract verifies the validity of submitted routes. By limiting path length and path count, \sysname bounds verification work independently of the trust network's size. 
The witness-commit mechanism further reduces both verification and storage costs by reusing an edge-load table across admission and resolution. Specifically, \sysname introduces a route witness that aggregates repeated edge loads, avoiding the repeated searches of an array-based verifier and supporting linear route verification. Moreover, instead of persisting the witness's per-edge reservation records, the contract stores only a constant-size commitment and authenticates the same load table during resolution. These techniques achieve linear verification and a constant persistent receipt state per admitted payment while preserving exact reservation accounting.

\para{Implementation and results.}
We implement a \sysname prototype and evaluate it using a graph derived from January--August 2026 Ethereum transactions and synthetic workloads (\Cref{sec-evaluation}). Our experimental results show that in \sysname, clients' exposure remains within the fixed cut bound under attacks with up to 10,000 Sybil identities, and constructed depletion workloads attain the amplification bound. Moreover, admitting a common payment costs approximately \$0.04--\$0.21 under the Ethereum price at the time of writing. In addition, across the tested routes, witness-commit reduces admission cost by up to 40.8\% compared with a baseline that reconstructs and stores per-edge reservation records.

\section{Background, Related Work, and Motivation} \label{sec-prelim}

\subsection{Blockchain Execution Model} \label{sec-bg-blockchain}
A blockchain implements a replicated state machine that orders transactions and deterministically executes them to update on-chain state~\cite{ethereum,solana,sui,aptos}. Users submit transactions that invoke \emph{smart contracts} to modify this state. Transaction execution is atomic: if a transaction succeeds, all of its state updates are committed; otherwise, none of them take effect.

Since transaction execution is replicated across validators, blockchains meter computational and storage resources, commonly through \emph{gas}~\cite{ethereumGas}, and impose execution limits to bound the work performed by validators. A transaction that exhausts its allocated resources fails, while block-level limits bound the aggregate execution work that can be included in a block. These limits make computationally expensive operations, e.g., graph-wide path discovery, unsuitable for on-chain execution.

A blockchain also serves as a public, verifiable ledger from which clients can obtain finalized state, transaction inputs, and execution logs. \sysname exploits these properties in three ways: it relies on atomic execution to coordinate concurrent transaction submissions, moves graph-wide path discovery outside replicated execution, and uses finalized on-chain data to recover unresolved receipt witnesses.


\subsection{Agent Registries and Reputation Risks} \label{sec-bg-8004}
ERC-8004~\cite{erc8004} is a proposed standard for permissionless AI-agent registries whose contracts have been live on Ethereum mainnet since January 2026~\cite{erc8004_mainnet}. It defines three registry interfaces. The \emph{Identity Registry} specifies how AI agents register on-chain identities and associated metadata. The \emph{Reputation Registry} allows agents to submit and retrieve feedback signals that can be used to assess the trustworthiness of other agents. The \emph{Validation Registry} enables agents to request independent verification of their work.

Recent measurements, however, reveal substantial risks in relying directly on these reputation signals. Xiong et al.~\cite{xiong2026can} find that feedback records in deployed ERC-8004 registries rarely include declared evidence of an underlying interaction. Across Ethereum~\cite{ethereum}, BNB Chain~\cite{bnbsmartchain}, and Base~\cite{base}, more than 98\% of feedback records contain neither a declared payment proof nor a task linkage. A shared-first-funder analysis flags 73.5\%, 59.2\%, and 90.6\% of reviewers on the three chains as participants in candidate coordinated campaigns. Removing feedback from those reviewers leaves 15.8\%, 77.9\%, and 86.8\% of rated agents without remaining feedback.
The same study further shows that reputation scores can be manipulated at very low cost. For rating-style tags and a representative trust threshold of 90, a single ceiling-valued feedback record moves the median rated agent above the threshold on each measured chain. The median gas cost of submitting such a record is only \$0.055 on Ethereum, \$0.0042 on BNB Chain, and \$0.0027 on Base. Consequently, an autonomous agent buyer that translates such a global reputation score directly into a payment decision may place an arbitrarily large value at risk, because the reputation mechanism itself imposes no protocol-level bound on the resulting financial exposure.


\begin{figure}[t]
    \centering
    \includegraphics[width=0.4\textwidth]{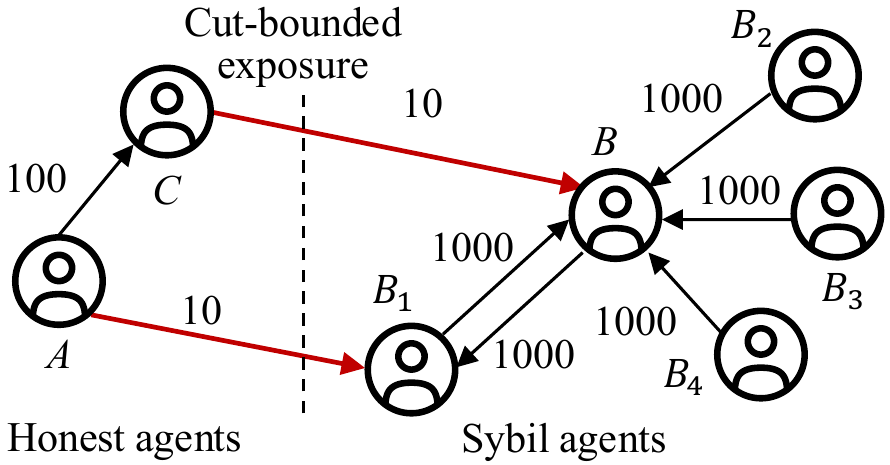}
    \caption{Bounding Sybil influence in trust networks. $A$ and $C$ are honest, while $B$ is malicious. Creating Sybil identities and inflating capacities within $B$'s region cannot increase the capacity that honest agents extend to that region.}
    \label{fig:bound-risk}
\end{figure}

\subsection{Trust Networks} \label{sec-bg-trust-network}
Trust networks~\cite{levien1998attack,walsh2006experience,kate2016introduction,moreno2018mind,maffei2017silentwhispers,viswanath2012canal,tran2009sybil} represent relationships as a weighted graph $\mathbb{G}=(\mathbb{V},\mathbb{E})$, where vertices represent participants and each edge carries a finite capacity extended by one participant toward another. Paths allow one participant to establish a personalized notion of trust in another through intermediaries. For a sender $s$ and receiver $d$, the aggregate feasible flow between them is bounded by the capacity of every cut separating $s$ from $d$. Consequently, adding identities, edges, or capacity entirely behind such a cut cannot increase this upper bound on feasible flow from $s$ to $d$ without increasing the capacity crossing the cut.

\Cref{fig:bound-risk} illustrates how this property limits the influence of Sybil identities.
Agents $A$ and $C$ are honest, while adversary $B$ controls Sybil identities $B_1,\ldots,B_4$. Every path from the honest region into the adversarial region must cross one of the two red edges, each with capacity 10. These edges form a cut of capacity 20, which bounds the aggregate flow from the honest region into the adversarial region. The adversary may create additional identities or arbitrarily increase capacities among identities within its own region (e.g., to 1,000), but doing so cannot increase this bound unless additional capacity is established across the cut.

Prior systems use this capacity constraint to bound resource use in several applications. Ostra~\cite{mislove2008ostra} uses credit to limit unwanted communication, and SumUp~\cite{tran2009sybil} uses adaptive flow over a trust graph to bound the influence of votes. Bazaar~\cite{post2011bazaar} applies personalized maximum flow to bound transaction risk in centralized online marketplaces. It temporarily reserves path capacity and consumes them when fraud is reported. These systems exemplify \emph{Sybil tolerance} through bounded resource use. Canal~\cite{viswanath2012canal} makes this approach scalable by approximating credit payments with landmark routing.

Trust networks also serve as payment infrastructure. Ripple~\cite{ripple} uses credit limits to support value transfer, and Dandekar et al.~\cite{dandekar2011liquidity} analyze how topology and credit capacity affect liquidity under repeated payments. PrivPay~\cite{pedro2015privacy} and SilentWhispers~\cite{maffei2017silentwhispers} provide privacy-preserving credit-network payments, while SpeedyMurmurs~\cite{roos2017settling} develops efficient, privacy-preserving routing for path-based transactions. On-chain systems demonstrate how routing can be separated from execution: Circles~\cite{circles} routes personal tokens through trust connections, while Trustlines~\cite{trustlines} implements transfers through bilateral credit relationships. Both use off-chain routing services to prepare transfers executed by smart contracts.

These works motivate \sysname's use of personalized flow to address the unbounded financial exposure from registry reputation signals in permissionless AI markets (\Cref{sec-bg-8004}). \sysname builds on Bazaar's reservation and feedback semantics and the separation of off-chain routing from on-chain execution exemplified by Circles and Trustlines. We present \sysname in~\Cref{sec-protocol}.

\section{System Model and Design Goals} \label{sec-problem-definition}

We consider a permissionless AI market in which agents discover services, select providers, and exchange value without relying on a central operator. Payment authorization is governed by personalized capacity in the trust network: an admitted payment atomically transfers value and reserves the capacity supporting its authorization. 
Feedback subsequently resolves the reservation and updates the trust network, releasing or consuming reserved capacity according to the sender's assessment of the service outcome.
\Cref{sec-system-model} and \Cref{sec-adversary-model} formalize the system and adversary models, respectively, while \Cref{sec-property-goal} defines the security properties and design goals of \sysname.

\subsection{System Model} \label{sec-system-model}
\para{Entities and system parameters.}
The system consists of clients, AI agents, off-chain pathfinders, a blockchain, and a \scname smart contract. A client may create and control one or more agents, and each registered agent has a stable identifier.
Following systems that bootstrap trust from pre-trusted nodes or trusted seeds~\cite{gyongyi2004trustrank,kamvar2003eigentrust,cao2012aiding,alvisi2013sok}, we assume a set of honest \emph{TrustRoots} $\mathbb T$. During system initialization, \sysname assigns finite available capacity to selected outgoing edges of these TrustRoots, thereby seeding the initial trust network.
A pathfinder reconstructs a finalized view of the trust graph from public blockchain data and computes candidate routes for requested payments. Before submitting a payment, the requesting client obtains a corresponding route from a pathfinder. Any party may run a pathfinder locally or provide pathfinding as a service. 
The outputs of pathfinders are untrusted, and the \scname contract independently validates all route-dependent state transitions.
We treat the blockchain as a black box throughout this paper and model it as a state machine replication. The \scname contract maintains the authoritative trust-network state and validates every state-changing operation. We use logical time $t$ to denote the system state immediately after a finalized contract transition.

We assume that all payment values and trust-network capacities are denominated in a single fixed payment asset and that each transfer delivers its stated amount. A deployment may use a stablecoin~\cite{usdc,usdt,dai,pyusd} to reduce value volatility. This common denomination enables direct capacity accounting without a price oracle.

Moreover, \sysname introduces several system parameters to bound cost and adversarial impact. Parameter $H_{\max}\geq 2$ bounds the maximum path length in edges, $K_{\max}$ bounds the maximum number of paths per route, and $U_{\max}$ bounds the maximum number of unique edges across a route.

\para{Trust-network state.}
At time $t$, the trust network $\network$ is a weighted, directed graph $\mathbb{G}_t=(\mathbb{V}_t,\mathbb{E}_t)$. Each vertex $s\in\mathbb{V}_t$ represents an agent. Each directed edge $e_{s,d}\in\mathbb{E}_t$ represents trust extended by $s$ toward $d$. Each edge maintains \emph{available capacity} $a_t(s,d)\geq 0$, which may support new transactions, and \emph{reserved capacity} $r_t(s,d)\geq 0$, which is committed to admitted but unresolved transactions. Its total capacity is $c_t(s,d)=a_t(s,d)+r_t(s,d)$.

\para{Routed transactions.}
A transaction of value $W>0$ from sender $s$ to receiver $d$ is supported by a route
\tightdisplay{
    \mathcal{P}=\{(P_1,w_1),\ldots,(P_k,w_k)\},
}%
where $1\leq k\leq K_{\max}$, each $P_i$ is a simple directed path from $s$ to $d$ with $1\leq |P_i|\leq H_{\max}$, and $w_i>0$ is the amount routed along $P_i$, with $\sum_i w_i=W$. We write $L(\mathcal{P})=\sum_i |P_i|$ for the total number of edge occurrences in the route, and thus $L(\mathcal{P})\leq K_{\max}H_{\max}$. The aggregate load placed by $\mathcal{P}$ on an edge $e$ is
\tightdisplay{
\ell_e(\mathcal{P}) =
\sum_{i\in P_i} w_i.
}%
A route is \emph{feasible} at $t$ if $\ell_e(\mathcal{P})\leq a_t(e)$ for every edge it uses.

\para{Route witness.}
Let $e_1\prec\cdots\prec e_U$ denote the distinct edges with positive aggregate used by $\mathcal{P}$, ordered lexicographically by their endpoints. The \emph{route witness} $\Lambda(\mathcal{P})=((e_1,\ell_{e_1}(\mathcal{P})),\ldots,(e_U,\ell_{e_U}(\mathcal{P})))$ records the aggregate load on each distinct route edge. It is supplied with the route at admission and later re-supplied whenever the routed reservation is resolved (\Cref{sec-route-witness-commit}).


\para{Trust-network interface.}
The trust network exposes the following abstract operations. 
Let $\mathcal{F}=\{\mathsf{pos},\mathsf{neu},\mathsf{neg}\}$ denote the feedback domain and let $\mathcal{R}=\{\mathsf{snd},\mathsf{rcv}\}$ denote the participant role (sender or receiver) being resolved.
\begin{itemize}
    \item $\network.\text{setup}() \rightarrow \mathsf{pp}$ initializes the trust network and publishes the system parameters $\mathsf{pp}$.
    \item $\network.\text{register}(\mu) \rightarrow s$ binds agent metadata $\mu$ to the caller, assigns the agent a unique identifier $s$, and records the caller as its owner.
    \item $\network.\text{joinProvider}(d,r,W) \rightarrow \mathsf{ok}$ allows a new agent $d$ (who wants to serve as a provider later) to obtain an initial incoming edge from TrustRoot $r$ by paying a fee $W$ (\Cref{sec-protocol-bootstrap}).
    \item $\network.\text{seedPay}(s,r,d,W) \rightarrow \rho$ allows a newly registered agent $s$ to pay $W$ to provider $d$ through TrustRoot $r$ (\Cref{sec-protocol-bootstrap}). It returns a receipt $\rho$ binding the transaction to that reservation.
    \item $\network.\text{transact}(s,d,W,\mathcal{P},\Lambda) \rightarrow \rho$ admits a transaction of value $W$ from sender $s$ to receiver $d$ with route witness $\Lambda$, atomically reserves its per-edge loads, and returns a receipt $\rho$.
    \item $\network.\text{feedback}(\rho,\eta,f,\widehat{\Lambda}) \rightarrow \mathsf{ok}$ accepts feedback $f\in\mathcal{F}$ for role $\eta\in\mathcal{R}$, updates trust, and resolves any reservation.
    \item $\network.\text{resolve}(\rho,\eta,\widehat{\Lambda}) \rightarrow \mathsf{ok}$ allows any client to resolve role $\eta\in\mathcal{R}$ if its feedback is still missing after a timeout.
\end{itemize}

\subsection{Threat Model and Assumptions} \label{sec-adversary-model}

We partition the agents into an honest set $\mathbb{H}$ and an adversarial set $\mathbb{A}$, where the TrustRoots $\mathbb T\subseteq\mathbb H$.
Honest agents follow the protocol, while the adversary can behave arbitrarily, such as creating arbitrarily many Sybil identities, coordinating all agents in $\mathbb{A}$, and constructing arbitrary internal trust relationships. The adversary may also interact freely with honest agents subject to the capacities that honest agents extend toward it.
However, the adversary cannot compromise the blockchain's security guarantees. 
For honest clients, we assume they can resolve their receipts before timeout. We further assume that, for every unresolved receipt whose timeout has expired, an honest client eventually invokes timeout resolution and can retrieve the receipt's finalized admission data or load events from the blockchain. Such resolution can be incentivized in practice, but designing the corresponding incentive mechanism is outside the scope of this paper.

\subsection{Security and Efficiency Goals} \label{sec-property-goal}

This section states the security properties and efficiency goals that \sysname is designed to satisfy. 
We first introduce notation for directed cuts and their capacities. For disjoint agent sets $\mathbb{X},\mathbb{Y}\subseteq\mathbb{V}_t$, define the directed cut from $\mathbb{X}$ to $\mathbb{Y}$ as
\tightdisplay{
    \delta_t(\mathbb{X},\mathbb{Y})=\{e_{s,d}\in\mathbb{E}_t:s\in\mathbb{X},\,d\in\mathbb{Y}\}
}%
with total capacity
\tightdisplay{
    C_t(\mathbb{X},\mathbb{Y})=\sum_{e\in\delta_t(\mathbb{X},\mathbb{Y})}c_t(e).
}

\begin{definition}[\textbf{Cut-Bounded Exposure}] \label{def:cut-bounded-exposure}
    Let $F_{\mathbb H}(t)$ denote the maximum aggregate payment loss that honest agents may incur at time $t$. A system provides cut-bounded exposure if
    \tightdisplay{
        F_{\mathbb H}(t)\leq C_t(\mathbb H,\mathbb A)
    }%
    for every partition $(\mathbb H,\mathbb A)$ and every time $t$.
\end{definition}

\begin{definition}[\textbf{Bounded Depletion Amplification}] \label{def:bounded-depletion}
    Let $D_{\mathbb H}(t)$ denote the cumulative capacity consumed on honest-to-honest edges by negative sender feedback for payments initiated by agents in $\mathbb A$, and let $P_{\mathbb A\rightarrow\mathbb H}(t)$ denote the cumulative value transferred from adversarial agents to honest agents up to time $t$. A system provides $\alpha$-bounded depletion amplification if
        \tightdisplay{
        D_{\mathbb H}(t) \leq \alpha P_{\mathbb A\rightarrow\mathbb H}(t)
        }%
for every partition $(\mathbb H,\mathbb A)$ and every time $t$.
\end{definition}

Definitions~\ref{def:cut-bounded-exposure} and~\ref{def:bounded-depletion} capture complementary adversarial risks. Cut-bounded exposure gives \sysname Sybil-resilient payment authorization. In particular, the bound on honest agents' payment exposure applies collectively to all adversarial identities and cannot be enlarged by adding identities, exchanging collusive feedback, or circulating payments entirely within the adversarial region (\Cref{cor:sybil-invariance}). 
Bounded depletion amplification limits the honest-to-honest capacity that malicious buyers can permanently consume relative to their cumulative payments to honest agents.

\begin{definition}[\textbf{Route Soundness}] \label{def:route-soundness}
    A system provides route soundness if every admitted routed transaction is supported by a route that is feasible in the contract state at admission, and the protocol reserves and later resolves exactly the route's aggregate edge loads.
\end{definition}

\begin{definition}[\textbf{Eventual Resolution}] \label{def:eventual-resolution}
    A system provides eventual resolution if every admitted transaction can eventually be resolved, after which no capacity associated with the transaction remains reserved.
\end{definition}

\para{Efficiency goal: linear verification and constant persistent state.}
\sysname should avoid graph-wide on-chain computation because blockchain execution is constrained by gas limits (\Cref{sec-bg-blockchain}). 
The cost of admitting and resolving a transaction should be independent of the overall trust-network size and scale at most \emph{linearly} with the submitted route, rather than quadratically with its size. Moreover, each admitted transaction should require only \emph{constant} persistent contract state, rather than storing state proportional to the route.
Our goal is to construct a permissionless payment-authorization protocol that simultaneously provides cut-bounded exposure, bounded depletion amplification, route soundness, and eventual resolution while satisfying this efficiency goal.


\section{\sysname Design} \label{sec-protocol}
\para{Overview.}
\sysname is a payment-authorization protocol for permissionless AI markets. To initiate a payment of value $W$, a sender typically obtains a personalized route from an off-chain pathfinder. The on-chain \scname contract verifies that the submitted route is feasible, transfers the payment, and atomically reserves its aggregate edge loads. Participant feedback subsequently resolves the reservation and updates the trust network. \Cref{fig:workflow} illustrates this workflow. \sysname operates in four phases over the trust-network interface defined in \Cref{sec-system-model}.
Phase 0 initializes the \scname contract and system parameters. Phase 1 allows clients to register agents and bootstrap initial trust edges. Phase 2 admits routed payments subject to available edge capacity and atomically reserves their supporting capacity. Phase 3 resolves these reservations through participant feedback or permissionless timeout resolution and updates the trust network accordingly.
We describe these phases in this section and defer their security and complexity analysis to \Cref{sec-analysis}.

\sysname addresses the three challenges identified in \Cref{sec-intro} while satisfying the security properties and efficiency goals defined in \Cref{sec-property-goal}.
For \textbf{C1}, \emph{atomic route reservation} ensures that concurrent payments cannot reuse the same available capacity: once a payment is admitted, its aggregate edge loads become reserved and remain unavailable until resolution. Since every route from an honest sender to an adversarial receiver must cross the honest-to-adversarial cut, this accounting bounds honest agents' aggregate payment exposure by the cut capacity, providing cut-bounded exposure (\Cref{def:cut-bounded-exposure}). Feedback-based resolution subsequently releases capacity after positive or neutral feedback, allowing capacity from successful interactions to be reused rather than permanently consumed.

For \textbf{C2}, \sysname combines an \emph{interaction-backed capacity discipline} with a \emph{bounded-stretch routing rule}. The former requires adversarial-to-honest ingress capacity to arise from paid interactions, while the latter limits every route path to at most $H_{\max}$ edges (\Cref{sec-protocol-transaction}). These mechanisms bound the honest-to-honest capacity that malicious senders can consume through negative feedback relative to the value they transfer to honest agents, yielding $(H_{\max}-1)$-bounded depletion amplification (\Cref{def:bounded-depletion}).

For \textbf{C3}, \sysname uses a \emph{compute--verify decomposition} (\Cref{sec-route-decomposition}) that moves graph-wide route discovery to untrusted off-chain pathfinders while retaining deterministic on-chain verification, thereby providing route soundness (\Cref{def:route-soundness}). To avoid storing per-edge reservation records for every admitted payment, the \emph{witness-commit mechanism} (\Cref{sec-route-witness-commit}) stores only a constant-size commitment while preserving a recoverable route witness for later resolution. This design achieves linear route verification and a constant persistent contract state. Finally, permissionless timeout resolution (\Cref{sec-protocol-feedback}) resolves reservations when participant feedback is missing, providing eventual resolution (\Cref{def:eventual-resolution}).

\begin{figure*}
    \centering
    \includegraphics[width=\textwidth]{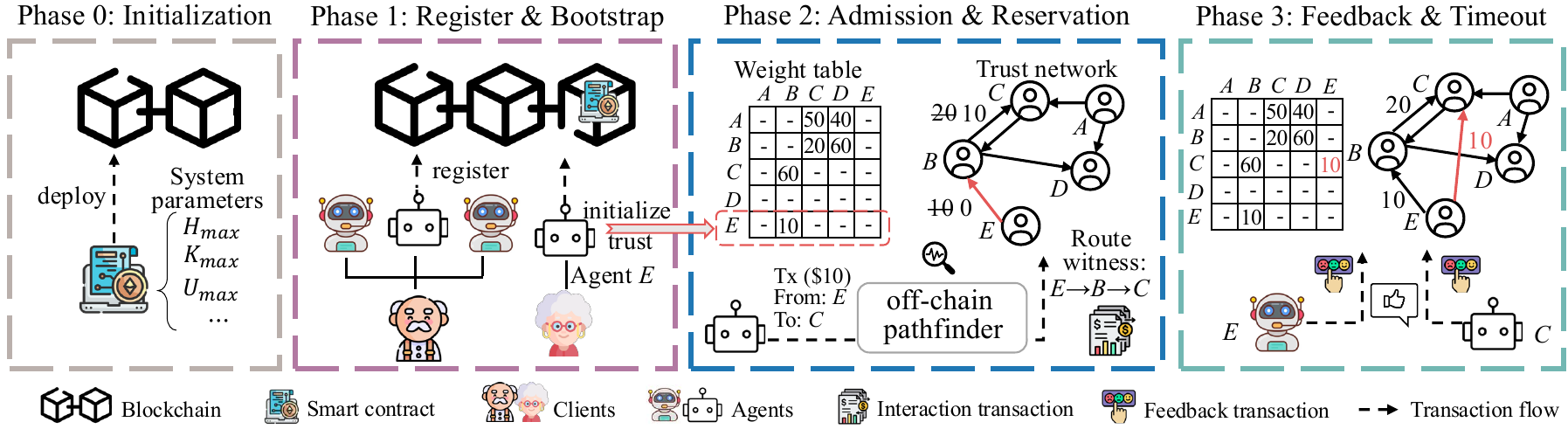}
    \caption{The workflow of \sysname. Phase 0: the \scname contract is deployed and initialized on-chain. Phase 1: clients register agents and bootstrap initial trust edges. Phase 2: agents interact through pathfinders off-chain and are verified on-chain. Phase 3: participant feedback updates the trust graph and resolves reserved capacity.}
    \label{fig:workflow}
\end{figure*}

\subsection{Protocol Initialization} \label{sec-protocol-setup}
The protocol begins by deploying and initializing the \scname contract through $\network.\text{setup}()$. Initialization fixes the contract address $\mathit{Addr}{\scname}$, feedback timeout $T_{\mathrm{fb}}$, maximum path length $H_{\max}$, maximum number of paths per route $K_{\max}$, and maximum number of unique route edges $U_{\max}$. We collect these values into the public parameters $\mathsf{pp}$.
The protocol chooses $H_{\max}$, $K_{\max}$, and $U_{\max}$ such that verifying up to $K_{\max}H_{\max}$ edge occurrences and $U_{\max}$ distinct edge records fits within the designated on-chain resource budget. 
During initialization, \sysname also seeds the trust network by assigning finite available capacity to a configured set of edges originating from the TrustRoots, with all reserved capacities initially set to zero. These seeded edges provide the initial trust from which subsequent payment routes can develop.

Initialization occurs once. Afterward, any client may invoke the public \scname interface without authorization from a marketplace operator.

\subsection{Agent Registration and Trust Bootstrap}
\label{sec-protocol-bootstrap}
After setup, clients may register agents and bootstrap trust through TrustRoot-supported capacity grants and seed payments. Registration creates a stable on-chain identity. A new provider agent can obtain an initial incoming trust edge, while a new buyer agent can make its first payment without already having a trust edge to the TrustRoot or the selected provider.
\para{Agent registration.}
A client invokes $\network.\text{register}(\mu)$ to register an agent on-chain, where $\mu$ contains the agent's metadata, such as its name, description, and offered services.
The \scname contract assigns the agent a unique identifier $s$ and records the association between $s$ and the client's on-chain address. The client thereafter controls the agent and uses $s$ as its stable identity in the network. A client may register multiple agents, but each agent receives a distinct identifier and maintains an independent trust profile as a separate vertex in the network.

\para{Trust bootstrapping.}
Every newly registered agent can attempt to bootstrap its trust as a service provider or as a buyer through a TrustRoot.
A TrustRoot may preauthorize a finite capacity budget for automatic provider bootstrapping at setup. A new service provider $d$ invokes $\network.\text{joinProvider}(d,r,W)$ with a TrustRoot $r\in\mathbb T$ and pays an irrecoverable fee $W>0$. If the root's budget permits the grant, \scname atomically charges the fee and adds $W$ available capacity to $e_{r,d}$ without requiring the root to approve $d$ individually. 
This mechanism gives trust bootstrapping a practical interpretation as a market entry cost. A provider can pay a modest fee to support small initial purchases, then earn additional incoming capacity through positive buyer feedback on paid interactions (\Cref{sec-protocol-feedback}). The fee also ties bootstrap exposure to an explicit economic cost: because TrustRoots are honest, granting $W$ capacity to an adversarial provider increases the honest-to-adversarial cut capacity by $W$ and requires an irrecoverable payment of the same amount. Creating additional Sybil identities therefore cannot obtain bootstrap capacity for free.

To bootstrap its first outgoing trust relationship (as a buyer), a newly registered agent $s$ invokes $\network.\text{seedPay}(s,r,d,W)$, specifying a TrustRoot $r\in\mathbb T$, a service provider $d$, and a payment value $W>0$. It is treated as a special routed payment over the logical path $s\rightarrow r\rightarrow d$: no trust edge or capacity is required for the first hop, while the direct edge $e_{r,d}$ must have at least $W$ available capacity. If admitted, the protocol transfers $W$ directly from $s$ to $d$ and reserves $W$ capacity on $e_{r,d}$. This reservation is subsequently resolved using the same feedback interfaces and timeout rules as ordinary routed payments (\Cref{sec-protocol-transaction} and \Cref{sec-protocol-feedback}).


\subsection{Payment Admission and Capacity Reservation}
\label{sec-protocol-transaction}
After bootstrapping, an owner-controlled client for agent $s$ may select service provider $d$ and initiate a payment of value $W$. \sysname admits the payment only if the client supplies a feasible route $\mathcal{P}$ from $s$ to $d$ that carries total value $W$. 
The client therefore obtains a candidate route from an off-chain pathfinder before submitting the payment. Successful admission atomically transfers $W$, reserves the route's aggregate edge loads, and creates a receipt that binds subsequent resolution to this interaction. If admission fails, the transaction reverts without modifying the \scname state.

Admission separates three responsibilities: off-chain route discovery, on-chain route verification and atomic capacity reservation, and receipt creation for subsequent resolution. To perform these tasks efficiently, \sysname combines a \emph{compute--verify decomposition} with a \emph{witness-commit mechanism}. 
The former keeps graph-wide path discovery outside replicated execution. The latter verifies a route witness at admission and commits to it for reuse during resolution, enabling linear verification and constant persistent receipt state.

\subsubsection{Compute--verify Decomposition} \label{sec-route-decomposition}
\sysname performs route discovery off-chain and verifies the resulting route on-chain. Because the trust-network state is publicly reconstructible from finalized blockchain data, a client may maintain the graph locally and run any compatible pathfinding algorithm. Resource-constrained clients may instead query independently operated pathfinders that index the \scname state. Pathfinder outputs are untrusted: a stale, malformed, or otherwise infeasible route is rejected by the contract without modifying the trust-network state.

Our prototype pathfinder adapts the multipath routing algorithm of Bazaar~\cite{post2011bazaar} with explicit search bounds. It stops extending a path once the path reaches $H_{\max}$ edges and terminates once the selected paths collectively carry $W$ or their number reaches $K_{\max}$. If it cannot obtain total value $W$ within these bounds, the pathfinder reports failure.

\subsubsection{The Witness-commit Mechanism} \label{sec-route-witness-commit}
Moving route discovery off-chain avoids graph-wide on-chain search, but two costs remain: the contract must verify aggregate loads when different paths share edges, and it must retain enough authenticated information to resolve the resulting reservations later. \sysname addresses both costs with the \emph{witness-commit mechanism}, which reuses the same edge-load table across admission and resolution. At admission, the contract verifies the table against the submitted route and uses its loads to reserve capacity. A constant-size commitment authenticates the same table when it is later re-supplied for resolution, replacing route-sized persistent receipt state.

\para{Route witness.}
To admit a payment, the contract must sum the loads contributed by all paths that share an edge and check that the total does not exceed that edge's available capacity. A simple implementation processes the $L=L(\mathcal{P})$ edge occurrences by repeatedly searching an array of up to $U$ distinct edges to accumulate the load on each unique edge, requiring $O(LU)$ work. EIP-1153~\cite{eip1153} supports linear-time aggregation through transient storage. However, transient reads and writes cost more gas than ordinary memory accesses, and the accumulated loads must be cleared or kept separate across multiple contract calls within the same transaction. \sysname instead has the client prepare a sorted edge-load table and supply the index of the table entry corresponding to each edge occurrence. The contract checks these indices and recomputes the loads in a temporary memory array to verify the supplied totals. This enables linear verification using memory that is local to the current contract call, without managing temporary totals across calls. The verified table also serves as the route witness committed at admission and re-supplied during resolution, linking both phases through the same representation.

Specifically, the client constructs the route witness $\Lambda(\mathcal{P})$ defined in \Cref{sec-system-model}: a lexicographically sorted, duplicate-free table of the $U$ distinct route edges and their aggregate loads. The submission associates each path-edge occurrence with its corresponding witness entry and provides a sorted node-order certificate for each path. During a single scan of the route, \scname verifies these associations and certificates, accumulates each occurrence directly into its indexed witness entry, and checks the resulting loads against the claimed witness table. This procedure verifies path simplicity and aggregate loads in $O(L+U)$ work.

\para{On-chain route verification and reservation.}
The sender $s$ invokes $\network.\text{transact}(s,d,W,\mathcal{P},\Lambda(\mathcal{P}))$ to submit a transaction $Tx_i$ with the discovered route and witness. 
The encoding exposes the number of paths and each path length before its vertex sequence, allowing \scname to enforce $1\leq k\leq K_{\max}$ and $|P_i|\leq H_{\max}$ before processing each path.
Subject to these size checks, \scname verifies that the caller owns $s$ and has sufficient funds, every path starts at $s$ and ends at $d$, consecutive vertices form existing directed edges, every path is simple, every $w_i>0$, $\sum_iw_i=W$, and the witness contains exactly the aggregate route loads. It also checks that the witness contains at most $U_{\max}$ unique edges. For every unique edge $e$, it then checks $\ell_e(\mathcal{P})\leq a_t(e)$ against the contract state immediately before admission.

If any check fails, the transaction reverts without changing the state. Otherwise, \scname transfers $W$ from $s$ to $d$ and atomically moves $\ell_e(\mathcal{P})$ units from available to reserved capacity on every involved edge. Since the checks and updates are atomic, concurrently submitted transactions ordered by the blockchain will not consume the same reservation.

\para{Constant-state receipt commitment.}
Later resolution must recover exactly the edge loads reserved by an admitted transaction. A naive design could persist all $U$ edge-load records from $\Lambda(\mathcal{P})$ in contract storage until resolution, incurring $O(U)$ persistent state and writes per admitted transaction.

Instead, after validating the route witness $\Lambda_i=\Lambda(\mathcal{P})$, \scname stores constant-size receipt metadata together with a receipt-bound commitment $C_{\rho_i}=\mathsf{Hash}(\Lambda_i)$.
When the receipt is later resolved, the resolver re-supplies the same load table $\Lambda_i$ verified at admission as transaction input. The contract verifies its hash against $C_{\rho_i}$ before applying any capacity update, ensuring that resolution operates on exactly the edge loads reserved at admission. Resolution requires only this load table, without resupplying the paths or their verification certificates. Because the witness is included in the finalized admission transaction, it can also be recovered from public blockchain data when timeout resolution is required. Consequently, \sysname reduces persistent receipt state from $O(U)$ to $O(1)$ without sacrificing exact reservation resolution.

\subsection{Feedback and Timeout Resolution}
\label{sec-protocol-feedback}
Feedback resolves admitted transactions and updates trust. After a transaction $Tx_i$ creates receipt $\rho_i$, either participant may invoke $\network.\text{feedback}(\rho_i,\eta,f,\widehat{\Lambda}_i)$, where $\eta\in\{\mathsf{snd},\mathsf{rcv}\}$ denotes the sender or receiver and $f\in\{\mathsf{pos},\mathsf{neu},\mathsf{neg}\}$.
The two roles are finalized independently because they express directional trust judgments. Sender feedback resolves the payment reservation and may update trust from $s$ to $d$, whereas receiver feedback only updates trust from $d$ to $s$.
\Cref{fig:feedback-illustration} illustrates an interaction in which sender $E$ pays receiver $C$ value $W=5$ through $E\rightarrow B\rightarrow C$. Immediately after admission, five units on both route edges are reserved.

\begin{figure*}[t]
    \centering
     \includegraphics[width=0.9\textwidth]{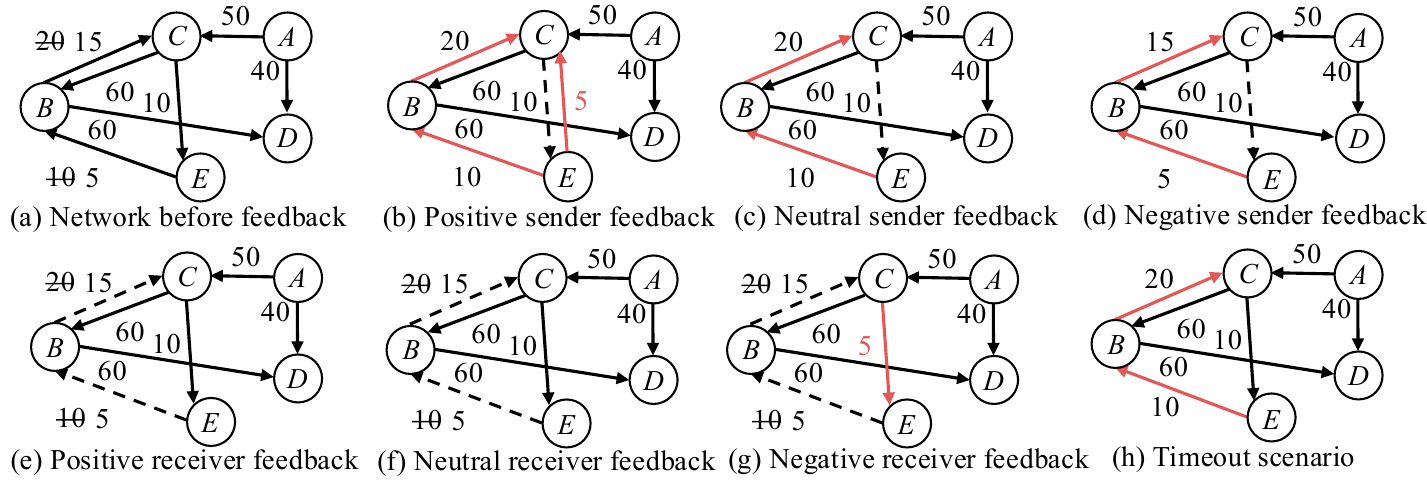}
    \caption{Feedback for a payment from $E$ to $C$ over $E\rightarrow B\rightarrow C$. 
    Red arrows indicate updated edges and dashed arrows unaffected reverse edges.
    (a) Admission reserves route capacity. (b)--(d) Sender feedback. (e)--(g) Receiver feedback. (h) Timeout resolution.}
    \label{fig:feedback-illustration}
\end{figure*}

\para{Sender feedback.}
Positive feedback releases the reserved route loads back to available capacity and creates or increases the direct edge $e_{s,d}$ by $W$, since the sender is satisfied with the service and wants to build stronger trust toward the receiver (\Cref{fig:feedback-illustration}(b)). This capacity is \emph{interaction-backed} since it arises from an executed payment. Neutral feedback releases the reservation without changing $e_{s,d}$ (\Cref{fig:feedback-illustration}(c)). Negative feedback permanently consumes the reserved loads, reducing the total capacity of every involved edge accordingly, and creates no new capacity (\Cref{fig:feedback-illustration}(d)). The same rules apply to seed payments, whose only reserved load is $W$ on $e_{r,d}$.




\para{Receiver feedback.}
Receiver feedback neither resolves the route reservation nor creates capacity. Positive and neutral feedback leaves the trust graph unchanged (\Cref{fig:feedback-illustration}(e)--(f)). Negative feedback reduces the available reverse capacity by $\min{W,a_t(d,s)}$, and hence total capacity by the same amount, while leaving reserved capacity unchanged (\Cref{fig:feedback-illustration}(g)). If $e_{d,s}$ does not exist, no graph update occurs.





\para{Feedback timeout.}
Each feedback role can be applied at most once. If a role remains unresolved after $T_{\mathrm{fb}}$, any client may invoke $\network.\text{resolve}(\rho_i,\eta,\widehat{\Lambda}_i)$. For a missing sender role, the resolver reconstructs $\Lambda_i$ from finalized admission data and applies the neutral transition, releasing the reservation without creating capacity. A missing receiver role is resolved as a no-op. The contract then marks the corresponding role as final. Under the timeout-resolution assumption of \Cref{sec-adversary-model}, no admitted transaction retains reserved capacity indefinitely.

\section{Implementation and Evaluation} \label{sec-evaluation}

Our evaluation aims to answer the following questions: 
\begin{itemize}
    \item \textbf{RQ1--Bound behavior}: Does \sysname preserve cut-bounded exposure, and how tight is its depletion-amplification bound (\Cref{sec-eval-sybil-resist})?
    \item \textbf{RQ2--On-chain efficiency}: What is \sysname's on-chain execution cost, and how do route complexity and receipt mechanisms affect it (\Cref{sec-eval-on-chain-cost})?
    \item \textbf{RQ3--Routing utility and scalability}: How do route bounds, network scale, and pathfinding algorithms affect routing utility and discovery latency (\Cref{sec-eval-route})?
\end{itemize}
We also evaluate topology-sensitive liquidity loss and transaction contention. We defer these results to \Cref{sec-more-experiments}.

\para{Implementation and setup.}
We implement a prototype for \sysname that comprises three main components: a \scname contract in Solidity~\cite{solidity}, an off-chain pathfinder, and clients in Python~\cite{python}. The \scname contract implements all the on-chain interfaces listed in \Cref{sec-system-model}. 
The off-chain pathfinder runs a contract indexer to collect state transitions for the \scname contract through emitted events and reconstruct the updated trust network. It runs a path discovery algorithm to find feasible routes for payments submitted by clients.
A client can register agents, bootstrap their first trust edges, query the pathfinder for feasible routes, construct route witnesses, submit payments, recover route witnesses, and drive feedback and timeout resolution. 
The total implementation includes around 4.4K lines of code (LOC) for the prototype and around 3.2K LOC for tests.

We run our evaluation on a Linux Ubuntu server with 48 CPU cores and 128 GB of RAM. 
We compile the \scname contract to Ethereum Virtual Machine (EVM) bytecode using the Foundry~\cite{foundry} toolkit and deploy it to an Anvil-based Ethereum chain. Anvil~\cite{anvil} provides a local evaluation environment. It executes contract transactions using EVM gas accounting and mines each submitted transaction locally, allowing us to control the initial state and repeat each experiment on a fresh deployment.

\para{Baselines.}
Since there is no single protocol implementing all functionalities studied by \sysname, we use metric-specific comparison mechanisms.

For Sybil exposure (\Cref{sec-eval-sybil-resist}), (1) \emph{global-count} is a policy that aggregates positive feedback records, and (2) \emph{global-volume} aggregates reported positive-interaction values. These are two unrestricted global policies that the adversary can exploit to inflate their reputation with Sybil attacks, as discussed in \Cref{sec-intro}. In contrast, \sysname adopts a personalized-flow principle built on a trust network to prevent the Sybil risk. We calibrate each policy's linear score-to-value conversion so that its one-identity exposure equals \sysname's cut capacity. 

For on-chain efficiency (\Cref{sec-eval-on-chain-cost}), (1) the \emph{route-aggregation store} mechanism accepts the paths rather than a route witness, reconstructs aggregate loads in $O(LU)$ time, and stores all $U$ edge-load records with the receipt, and (2) the \emph{witness-store} mechanism verifies a route witness in $O(L+U)$ time and stores the same $U$ records.
\sysname's witness-commit mechanism also verifies the route witness in $O(L+U)$ time, but replaces the stored edge-load table with one commitment and therefore requires only $O(1)$ persistent receipt storage.

For routing (\Cref{sec-eval-route}), \sysname uses a \emph{capacity-scaling BFS multipath} algorithm motivated by \cite{post2011bazaar}. It initializes a residual graph and processes power-of-two thresholds $\gamma$ from the largest no greater than the maximum available capacity down to one. At each threshold, it runs a hop-bounded breadth-first search for a simple $s$--$d$ path containing only residual edges of capacity at least $\gamma$, augments by the path bottleneck or the remaining transaction value, and subtracts that load from the residual graph. It stops when the selected paths carry the transaction value $W$, when it has selected $K_{\max}$ paths, or when all thresholds are exhausted. For comparison, we implement (1) the \emph{unscaled BFS multipath algorithm} that repeatedly augments over all positive residual edges, and (2) the \emph{repeated widest-path algorithm} that repeatedly selects the hop-bounded path with maximum bottleneck capacity. 

\para{Dataset.}
We evaluate \sysname with two datasets: the real-world \emph{Ethereum-derived} dataset and the \emph{synthetic} dataset. First, we collect transaction data from Ethereum mainnet blocks from January 29, 2026, when ERC-8004 was deployed, through August 2, 2026. We use the collected data to construct a trust graph and set up the initial states of \scname by applying several filter rules (\Cref{sec-dataset-detail}). The final Ethereum-derived graph contains 1,055 vertices and 1,628 directed edges with the median capacity of 0.00494~ETH. Since the permissionless AI agent market is in its infancy, this graph is sparse and small. We therefore also generate synthetic strongly connected directed graphs of 128, 512, 1,024, 2,048, 4,096, and 8,192 vertices with six directed edges per vertex and capacities sampled uniformly from $[10,100]$ to separate system scale from the sparsity of the emerging deployment.

\subsection{Sybil Exposure and Capacity Depletion} \label{sec-eval-sybil-resist}

We first evaluate \sysname's bound behavior using both Ethereum-derived and synthetic datasets. 

\para{Cut-bounded exposure} (\Cref{fig:eval-sybil}). 
In the Ethereum-derived setting (\Cref{fig:eval-sybil}(a)), we select five disjoint regions from the constructed graph, reflected by the shadow with the upper boundary as the maximum incoming capacity to the region and the lower boundary as the minimum incoming capacity. The selected regions are marked as Sybil regions, and each region's incoming capacity is the honest-to-adversarial cut capacity $C_t(\mathbb H,\mathbb A)$. For each region, we denote this fixed snapshot value by $C$ below. We then append $m\in\{1,10,100,1000,10000\}$ synthetic Sybil vertices in the region, each of which has an edge with capacity $C$ into the region. The synthetic setting is designed for graph structure generalization (\Cref{fig:eval-sybil}(b)), in which we construct a gateway graph for each Sybil count $m$. We set the honest-to-adversarial cut capacity $C$ to 1 unit (after normalization) and append $m$ synthetic Sybil vertices behind a single gateway. Each Sybil vertex has an edge with capacity of 1 unit into the gateway. \Cref{fig:eval-sybil} shows that under both settings, both global policies (i.e., global-count and global-volume) authorize $mC$ for $m$ identities, while \sysname's exposure remains cut-bounded at $C$ regardless of the Sybil count.

\begin{figure}[t]
    \centering
    \includegraphics[width=\columnwidth]{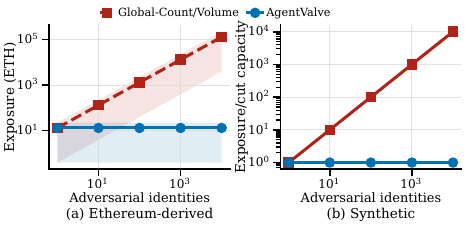}
    \caption{Sybil-region payment exposure under Ethereum-derived and synthetic graphs.
    }
    \label{fig:eval-sybil}
\end{figure}

\begin{figure*}[t]
    \centering
    \includegraphics[width=\textwidth]{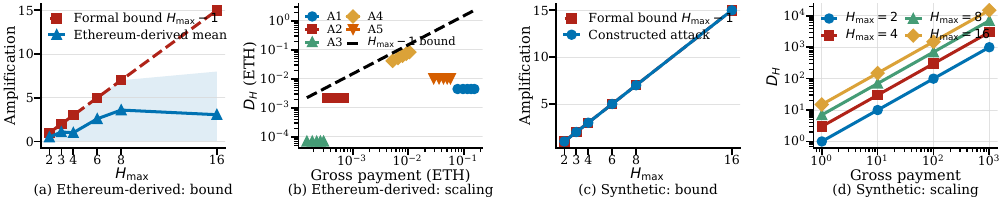}
    \caption{Bounded depletion amplification in Ethereum-derived and synthetic topologies. (a) Selected adversarial senders in the Ethereum-derived graph remain below the exact $H_{\max}-1$ bound across reachable route lengths. (b) Ethereum-derived depleted capacity $D_{\mathbb H}$ versus cumulative adversarial-to-honest payment value $P_{\mathbb A\rightarrow\mathbb H}$. (c) The synthetic worst-case route attains the exact $H_{\max}-1$ bound. (d) Synthetic honest-capacity depletion scales linearly with $P_{\mathbb A\rightarrow\mathbb H}$.}
    \label{fig:eval-depletion}
\end{figure*}

\para{Bounded depletion amplification} (\Cref{fig:eval-depletion}).
We then evaluate bounded depletion amplification, where the metric is calculated as $D_{\mathbb H}/P_{\mathbb A\rightarrow\mathbb H}$, with $D_{\mathbb H}$ as cumulative honest-to-honest route capacity consumed by adversarial senders' negative feedback and $P_{\mathbb A\rightarrow\mathbb H}$ as cumulative adversarial-to-honest payment value.
In the Ethereum-derived setting, we select five vertices $A1$--$A5$ with high-capacity outgoing edges as adversarial senders. 
\Cref{fig:eval-depletion}(a) shows the depletion amplification under different $H_{\max}$ (the shadow reflects the ranges while the solid represents the average), where the adversary picks a simple path with as many edges as possible subject to $H_{\max}$. Mean amplification increases overall from 0.5 at $H_{\max}=2$ to 3.6 at $H_{\max}=8$, then decreases to 3.06 at $H_{\max}=16$. The decrease occurs because the longer path selected at $H_{\max}=16$ can force a much narrower bottleneck (i.e., the smallest capacity across all edges in the path). In other words, the amplification is not necessarily scaled with $H_{\max}$ because the downstream honest-edge capacity can be the bottleneck and limit the amplification.
\Cref{fig:eval-depletion}(b) shows how depleted honest capacity changes as the adversary increases its payment $P_{\mathbb A\rightarrow\mathbb H}$ with $H_{\max}=16$. For A4, the adversarial ingress remains the path bottleneck, so the depleted capacity increases proportionally with the payment. For A1, A2, A3, and A5, however, a downstream honest edge is already the bottleneck. Increasing the payment, therefore, cannot increase the routed amount, causing the depleted capacity to plateau. These results show that additional adversarial expenditure causes additional depletion only when the adversarial ingress is the bottleneck, whereas downstream honest-edge capacity imposes a tighter limit.

\Cref{fig:eval-depletion}(c) shows the depletion amplification under different $H_{\max}$ in the synthetic setting. We construct a path containing exactly $H_{\max}$ edges, where the first edge provides the adversarial ingress and the remaining $H_{\max}-1$ edges originate from honest agents. All edges have capacity $x=P_{\mathbb A\rightarrow\mathbb H}$. The attack therefore depletes $x$ units from each of the $H_{\max}-1$ honest-sourced edges, yielding an amplification of exactly $H_{\max}-1$. The measured amplification increases from 1 at $H_{\max}=2$ to 15 at $H_{\max}=16$, matching the formal upper bound throughout. This result demonstrates that the $(H_{\max}-1)$-bounded depletion-amplification guarantee is tight.
\Cref{fig:eval-depletion}(d) shows how depleted honest capacity changes with $P_{\mathbb A\rightarrow\mathbb H}$ in the synthetic setting. Since every edge on the constructed path has capacity $P_{\mathbb A\rightarrow\mathbb H}$, no downstream edge imposes a narrower bottleneck. Consequently, each additional unit of adversarial payment enables one additional unit of depletion on each of the $H_{\max}-1$ honest-sourced edges, giving $D_{\mathbb H}=(H_{\max}-1)P_{\mathbb A\rightarrow\mathbb H}$. Therefore, the depleted capacity grows linearly with payment. This result confirms that $H_{\max}$ directly bounds the rate at which adversarial expenditure can be amplified into honest-capacity depletion.

\subsection{On-Chain Execution Cost} \label{sec-eval-on-chain-cost}

We next evaluate \sysname's on-chain efficiency from two perspectives: the cost of processing routed payments under different route shapes and the cost of other protocol operations. We report these costs in EVM gas, the unit used to meter contract computation, memory, storage, and calldata, which isolates execution overhead from fluctuations in the gas price and ETH exchange rate. For intuition, at an Ethereum mainnet gas price of 0.1--0.5 gwei, one million gas costs $10^{-4}$--$5\times10^{-4}$ ETH, or approximately \$0.25--\$1.25 at an exchange rate of \$2,498 per ETH.

\begin{figure*}[t]
    \centering
    \includegraphics[width=\textwidth]{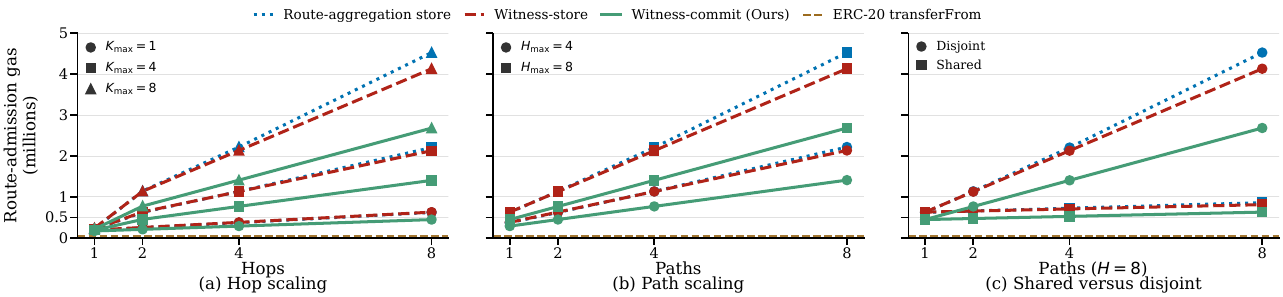}
    \caption{Gas consumed by routed-payment admission under \sysname and the comparison mechanisms. (a) Hop-count scaling with one, four, and eight paths. (b) Path-count scaling for edge-disjoint routes of four and eight hops per path. (c) Shared versus edge-disjoint routes with eight hops per path. Horizontal lines show the route-free ERC-20 \texttt{transferFrom} (53,682 gas).}
    \label{fig:eval-route-admission-gas}
\end{figure*}

\para{Hop-count scaling} (\Cref{fig:eval-route-admission-gas}(a)). For a one-path route, witness-commit admission consumes 168,741 gas at one hop, 287,923 gas at four hops, and 446,844 gas at eight hops. These costs are 12.2\%, 24.1\%, and 28.9\% lower than witness-store, respectively, and 9.7\%, 22.4\%, and 28.3\% lower than route-aggregation store. For an eight-path route, witness-commit saves 10.0\%, 33.9\%, and 35.1\% over witness-store and 4.8\%, 36.4\%, and 40.8\% over route-aggregation store at one, four, and eight hops. The experimental results show that witness-commit consistently reduces gas consumption relative to the two baselines, and the savings grow with route complexity. In addition, the route-aggregation store mechanism is less costly than witness-store when the path count is small (e.g., one path), because its repeated array searches remain inexpensive, while witness-store pays for the larger witness and its verification. However, as the path count grows, the repeated searches become more expensive, and witness-store becomes cheaper than route-aggregation store.

\para{Path-count and overlap scaling} (\Cref{fig:eval-route-admission-gas}(b) and (c)). We compare edge-disjoint routes, whose paths share no edges and therefore have $U=L$, with fully shared routes, whose paths reuse the same eight edges and therefore keep $U=8$ while $L$ grows. For edge-disjoint eight-hop paths (\Cref{fig:eval-route-admission-gas}(b)), increasing the path count from one to two, four, and eight increases $(L,U)$ from $(8,8)$ to $(16,16)$, $(32,32)$, and $(64,64)$. Witness-commit admission consumes 447,168, 766,853, 1,406,238, and 2,685,032 gas across these settings. At eight paths, it saves 35.1\% over witness-store's 4,134,189 gas and 40.7\% over route-aggregation store's 4,528,929 gas. For fully shared routes \Cref{fig:eval-route-admission-gas}(c), increasing the path count from one to eight raises witness-commit admission from 447,168 to 629,727 gas. At eight paths, witness-store consumes 811,669 gas and route-aggregation store consumes 861,649 gas, so witness-commit saves 22.4\% and 26.9\%, respectively. The larger savings for disjoint routes arise because $U$ grows with the path count and witness-commit eliminates the $O(U)$ persistent receipt storage incurred by witness-store.

We also evaluate the additional gas imposed by trust-network routing over the route-free ERC-20 \texttt{transferFrom}~\cite{erc20}, the standard token-transfer operation used on Ethereum. These figures cover payment admission, including the token transfer and capacity reservation; feedback and timeout-resolution costs are reported separately in \Cref{tab:eval-lifecycle-gas}.
As shown in \Cref{fig:eval-route-admission-gas}, witness-commit admission consumes 168,741 gas for a one-hop, one-path route, which is 3.14$\times$ the ERC-20 baseline and approximately \$0.04--\$0.21 under the prices above. The ratio rises to 8.32$\times$ for an eight-hop, one-path route. Repeated counterparties are common in our Ethereum-derived trace: 10,899 of 12,527 retained transfers (87.0\%) occur after an earlier transfer by the same ordered owner pair (\Cref{sec-dataset-detail}). Once a positively resolved interaction establishes sufficient direct-edge capacity, subsequent payments to that provider can use a one-hop, one-path route. The one-hop result thus represents an important common case.

\para{Operation costs}.
\Cref{tab:eval-lifecycle-gas} summarizes the gas consumed by \scname's operations apart from route payments. Agent registration consumes 100,173 gas, which is a one-time cost when an agent is first registered. For trust bootstrap, a provider consumes 121,676 gas to initialize an incoming edge.
TrustRoot-supported seed-payment admission consumes 161,257 gas, while positive sender feedback that releases the TrustRoot reservation and creates the sender's direct edge consumes 96,531 gas. Routed sender and receiver feedback consume 48,813--59,644 and 39,628--48,094 gas, respectively. The negative receiver feedback consumes more gas than the positive and neutral receiver feedback, because it requires the capacity update.

For the timeout recovery cost, we execute several withholding scenarios. In particular, for routed receipts, recovering from missing sender or receiver feedback consumes 45,439 or 34,714 gas. When both roles withhold feedback, two transactions consume 78,588 gas in total. Every routed receipt finalizes, and every reservation reaches zero. For a failed seed payment, sender timeout consumes 43,629 gas, restores the 100 reserved units to the TrustRoot edge, and creates no direct sender-to-provider capacity. After positive sender feedback has activated a successful seed payment, receiver timeout consumes 34,714 gas and preserves all 100 units of available direct capacity.
Moreover, we also explore the efficiency of batch processing. One bounded batch resolves eight withheld roles across four receipts for 123,123 gas, or approximately 15,390 gas per role.
The timeout cases therefore exercise the receipt-specific neutral transitions and confirm that an unrelated caller can restore the intended state. 

\begin{table}[t]
    \setlength\abovecaptionskip{-0.1\baselineskip}
    \centering
    \caption{Gas for \scname operations and timeout recovery.}
    \label{tab:eval-lifecycle-gas}
    \footnotesize
    \setlength{\tabcolsep}{3pt}
    \renewcommand{\arraystretch}{1.04}
    \begin{tabular}{@{}p{0.58\columnwidth}>{\raggedleft\arraybackslash}p{0.36\columnwidth}@{}}
        \toprule
        \textbf{Operation or scenario} & \textbf{Gas used} \\
        \midrule
        \multicolumn{2}{@{}l}{\emph{Normal execution}} \\
        Agent registration & 100,173 \\
        Provider bootstrap & 121,676 \\
        Seed admission / activation & 161,257 / 96,531 \\
        Routed sender feedback (+ / 0 / $-$) & 59,644 / 50,362 / 48,813 \\
        Routed receiver feedback (+ / 0 / $-$) & 39,628 / 39,640 / 48,094 \\
        \addlinespace[2pt]
        \multicolumn{2}{@{}l}{\emph{Permissionless timeout recovery}} \\
        Routed sender / receiver withheld & 45,439 / 34,714 \\
        Routed both withheld (2 calls) & 78,588 \\
        Seed: sender / post-activation receiver & 43,629 / 34,714 \\
        Four routed receipts, eight roles (batch) & 123,123 (15,390/role) \\
        \bottomrule
    \end{tabular}
\end{table}

\begin{figure*}[t]
    \centering
    \includegraphics[width=\textwidth]{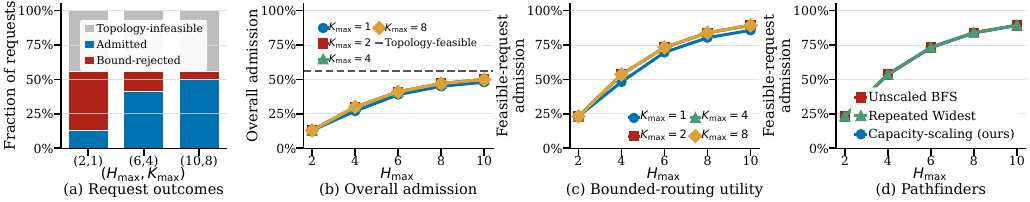}
    \caption{Routing utility on the Ethereum-derived graph. (a) Overall request outcomes at three representative $(H_{\max},K_{\max})$ settings. (b) Overall admission and the topology-feasible ceiling. (c) Admission among topology-feasible requests. (d) Topology-feasible admission for the three pathfinder algorithms at $K_{\max}=4$.
    }
    \label{fig:eval-routing-observed}
\end{figure*}

\begin{figure*}[t]
    \centering
    \includegraphics[width=\textwidth]{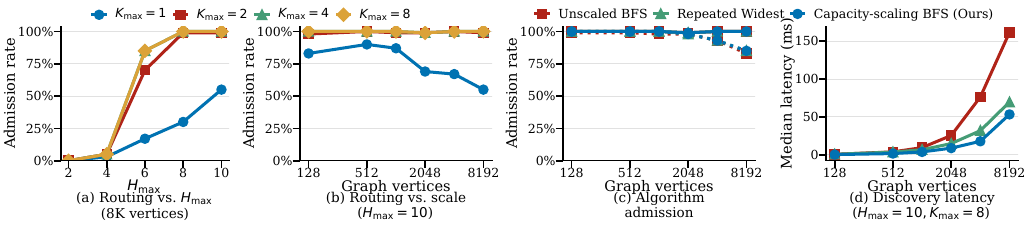}
    \caption{Routing utility and pathfinder scaling on synthetic graphs. (a) Admission versus $H_{\max}$ on the 8,192-vertex graph. (b) Admission versus graph size at $H_{\max}=10$. (c) Algorithm admission at $(H_{\max},K_{\max})=(10,8)$ (solid) and $(6,4)$ (dotted). (d) Median path-discovery latency at $(H_{\max},K_{\max})=(10,8)$.
    }
    \label{fig:eval-routing-pathfinder}
\end{figure*}

\subsection{Routing Utility and Pathfinder Scalability} \label{sec-eval-route}

We finally evaluate how the protocol bounds affect routing utility and how the pathfinder scales with the trust-network size. For each source $s$, the client sets the routing value to $W(s)=\min\{W_0,\max_{(s,v)\in\mathbb{E}_t}a_t(s,v)\}$ using only the available capacities of its outgoing edges, where $W_0$ is the median positive edge capacity (0.00494~ETH) on the Ethereum-derived graph and 50 units on each synthetic graph. We report two complementary utility metrics. \emph{Overall admission} is the fraction of sampled requests for which a pathfinder finds a route carrying the full payment value within $H_{\max}$ hops per path and $K_{\max}$ paths. After fixing each request, we compute its unrestricted maximum flow solely as an offline feasibility label. \emph{Topology-feasible admission} is the admitted fraction among requests whose unrestricted maximum flow can carry $W(s)$ and therefore isolates losses caused by the protocol bounds from requests that no routing algorithm could admit. Each evaluation uses 100 source--destination requests, and we also report path-discovery latency. We present results from the Ethereum-derived graph in \Cref{fig:eval-routing-observed}, followed by synthetic results in \Cref{fig:eval-routing-pathfinder}. The Ethereum evaluation reflects the current state of the network, while the synthetic evaluation simulates a more mature market with a larger and denser network.

\para{Evaluating the Ethereum-derived graph}.
\Cref{fig:eval-routing-observed}(a) separates each request into admitted, bound-rejected (i.e., topology-feasible but rejected by the pathfinder algorithm under the route bounds), and topology-infeasible (whose unrestricted maximum flow is below $W(s)$) outcomes. Although every sampled source--destination pair is connected, 44\% of requests have an unrestricted maximum flow below $W(s)$ and therefore are not feasible regardless of the route bounds. Moreover, tight bounds $(H_{\max},K_{\max})=(2,1)$ admit 13\% of all requests and reject another 43\% that are topology-feasible. At $(6,4)$ and $(10,8)$, admission rises to 41\% and 50\%, while bound-induced rejection falls to 15\% and 6\%, respectively.
This decomposition allows us to isolate the rejected requests that the topology cannot carry and to focus on the effect of the route bounds on the routing utility.

\Cref{fig:eval-routing-observed}(b) and (c) show the corresponding overall and topology-feasible admission rates. At $H_{\max}=10$, increasing $K_{\max}$ from one to two raises overall admission from 48\% to 50\% and topology-feasible admission from 85.7\% to 89.3\%. Larger path budgets provide no additional admission. At $H_{\max}=6$, the same increase raises topology-feasible admission from 69.6\% to 73.2\%. Thus, most requests that the current topology can carry are admitted with $K_{\max}=2$, while increasing the hop budget accounts for the larger utility gains.

\Cref{fig:eval-routing-observed}(d) compares the pathfinders at $K_{\max}=4$. All three achieve topology-feasible admission rates of 23.2\%, 53.6\%, 73.2\%, 83.9\%, and 89.3\% as $H_{\max}$ increases from 2 to 10. The identical admission rates show that the route bounds, rather than the pathfinding heuristic, dominate the routing utility on the sparse Ethereum-derived graph.

Moreover, we can see that since this market is in its infancy, agents will need larger bounds to get their transactions admitted under the sparse network. Fortunately, such high costs will be mitigated as the market grows, which will be shown in the next synthetic evaluation.

\para{Simulating a market with synthetic graphs}.
\Cref{fig:eval-routing-pathfinder}(a) shows the routing utility on the 8,192-vertex synthetic graph. With $K_{\max}=1$, admission rises gradually from 0\% at $H_{\max}=2$ and 3\% at $H_{\max}=4$ to 17\%, 30\%, and 55\% at $H_{\max}=6$, 8, and 10. When $K_{\max}$ increases to 2, admission rises to 70\% at $H_{\max}=6$ and 99\% at $H_{\max}=8$ and 10, showing that allowing one additional path recovers most of the routing utility. With $K_{\max}=8$, admission instead reaches 5\% at $H_{\max}=4$, 85\% at $H_{\max}=6$, and 100\% at $H_{\max}=8$ and 10. In particular, increasing the path budget raises admission at $H_{\max}=6$ from 17\% to 85\%. A single path often cannot carry the full payment even when the destination is reachable within the hop bound, whereas multipath routing can aggregate capacity across several routes. These results show that both bounds affect routing utility.

\Cref{fig:eval-routing-pathfinder}(b) shows the routing admission rates across different graph sizes and path counts with $H_{\max}=10$. At $H_{\max}=10$, a single-path limit admits 83\% of requests on the 128-vertex graph and 55\% on the 8,192-vertex graph. Allowing two paths maintains 98--100\% admission across all graph sizes, while $K_{\max}\geq4$ maintains 99--100\% admission. Thus, although a single path becomes increasingly restrictive as the trust network grows, a four-path budget recovers nearly full admission across the evaluated synthetic graphs.

\Cref{fig:eval-routing-pathfinder}(c) evaluates the admission rates of three pathfinding algorithms under varying trust-network sizes.
At $(H_{\max},K_{\max})=(10,8)$, all three algorithms admit 100\% of requests on the largest synthetic graph. At $(H_{\max},K_{\max})=(6,4)$, the capacity-scaling BFS multipath and repeated widest-path algorithms admit 85\%, while the unscaled BFS multipath algorithm admits 83\%. These results indicate that the route bounds dominate admission at the evaluated settings, while the choice of pathfinding algorithm has only a small effect on routing utility.

\Cref{fig:eval-routing-pathfinder}(d) shows the path-discovery latency of different pathfinding algorithms under distinct trust-network sizes at $(H_{\max},K_{\max})=(10,8)$.
On the 2,048-vertex graph, our capacity-scaling BFS multipath algorithm has a median path-discovery latency of 9.1~ms, which is 65.7\% lower than the 26.5~ms of the unscaled BFS multipath algorithm and 38.5\% lower than the 14.8~ms of the repeated widest-path algorithm. On the 8,192-vertex graph, our capacity-scaling BFS multipath algorithm has a median path-discovery latency of 62.5~ms, which is 61.9\% lower than the 164.0~ms of the unscaled BFS multipath algorithm and 13.0\% lower than the 71.8~ms of the repeated widest-path algorithm. The results show that our capacity-scaling BFS multipath algorithm consistently outperforms the other two algorithms in path-discovery latency, and the performance gap grows with the trust-network size.

\section{Conclusion and Discussion} \label{sec-conclusion}

We presented \sysname, a protocol for Sybil-resilient payment authorization in permissionless AI markets. Its exposure bound applies collectively to all adversarial identities and cannot be enlarged through reputation manipulation confined to the adversarial region. By combining atomic route reservation, interaction-backed capacity creation, and bounded route length, \sysname provides cut-bounded exposure and limits capacity depletion by malicious agents. Its compute--verify decomposition and witness-commit mechanism further enable linear route verification and constant persistent state per admitted payment. Our prototype evaluation shows that these guarantees remain effective under large-scale Sybil attacks while incurring practical on-chain costs.

\para{Depletion accounting.}
The factor $(H_{\max}-1)$ remains a valid, conservative upper bound. Under cumulative payment accounting, negatively resolved ordinary routed payments give factors of $H_{\max}-2$ for adversarial receivers and $(H_{\max}-1)/2$ for honest receivers. The latter case counts both the earlier payments that created ingress capacity and the subsequent negatively resolved payments. Negatively resolved seed payments give a factor of $1$. Since $H_{\max}\geq2$, the largest of these factors simplifies to $\max\{1,H_{\max}-2\}$, yielding a tighter general bound. For simplicity, we retain the factor $(H_{\max}-1)$ as a conservative upper bound throughout the paper.

\para{Exposure over time.}
Cut-bounded exposure bounds a snapshot of risk, not cumulative fraud loss. Provider grants and positive feedback can enlarge the cut. Moreover, it is possible that buyers cannot recognize and report service failures before the deadline; neutral timeout resolution can then restore capacity despite fraud, allowing repeated losses. Increasing the feedback timeout can reduce this risk but also risky in withholding capacity for longer periods. A sound solution may require an adaptive feedback timeout mechanism that allows using different timeouts for different services.

\para{Economic scope.}
The depletion denominator counts gross payments recorded by the protocol. A refund can return funds without removing the capacity they created. Thus, the bound does not establish a lower bound on net attack cost, whose estimation would require additional assumptions about service value, refunds, transaction fees, and opportunity costs, which are beyond the scope of this work and could be an interesting direction for future research.

\para{Reservation griefing.}
Atomic reservation prevents concurrent payments from reusing the same trust capacity, but it also creates a denial-of-service vector: a malicious buyer can keep an admitted route reserved until the sender-feedback timeout and then reserve it again after neutral resolution. Repeatedly doing so can deny routing capacity to honest buyers, constituting a form of reservation griefing analogous to congestion attacks in payment-channel networks~\cite{lu2020general}.
Prior work suggests several possible mitigations, including shorter routes and limits on concurrent requests from less-trusted peers~\cite{mizrahi2021congestion}, as well as capacity-aware routing that avoids congested edges when alternative paths are available~\cite{wang2019flash,sivaraman2020high}. In \sysname, bounding both the path length and the number of paths limits the amount of capacity that a single payment can withhold.
A complementary direction is to use geometrically decreasing reservations toward the receiver. For example, a payment of value 4 over a three-edge path could reserve 16, 8, and 4 units on successive edges. This design could limit downstream lockup when capacity near the buyer is scarce, but at the cost of requiring substantially more upstream capacity. Adapting these mechanisms to further mitigate reservation griefing remains future work.

\ifpublish
\section*{Acknowledgments}
We thank Alberto Sonnino for his helpful feedback during the early stages of this work. Jianting Zhang was supported in part by the Bilsland Fellowship.
\fi

\bibliographystyle{plain}
\bibliography{cite}

\appendix

\balance

\section{Security and Complexity Analysis} \label[appendix]{sec-analysis}

This section proves that \sysname provides cut-bounded exposure, bounded depletion amplification, route soundness, and eventual resolution, and then shows that it meets the efficiency goal in \Cref{sec-property-goal}.

\subsection{Commitment and Capacity Accounting}

We first establish the commitment and capacity-accounting properties used by the subsequent proofs.


\begin{lemma}[Witness-commit binding and recovery] \label{lem:witness-commit}
Let $\rho$ be a routed receipt admitted with route witness $\Lambda(\mathcal{P})$. A witness \emph{opens} $\rho$ if the hash of its encoding matches the commitment stored in $\rho$. 
In \sysname, any witness accepted to open $\rho$ equals $\Lambda(\mathcal{P})$, and any party can reconstruct that witness independently of the pathfinder.
\end{lemma}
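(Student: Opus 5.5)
The lemma has two parts, binding and recovery, and I would prove them separately. Both rest on how admission is specified in \Cref{sec-route-witness-commit}: the commitment $C_\rho=\mathsf{Hash}(\Lambda(\mathcal{P}))$ is written only after the contract has checked the supplied table against the route, and is never changed afterwards.

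\emph{Binding.} First I would fix the encoding of a witness as a canonical, injective serialization: a lexicographically sorted, duplicate-free list of (edge, load) pairs with explicit length. Distinct tables then have distinct encodings. Next I would argue that the committed value really is $\Lambda(\mathcal{P})$. Admission recomputes the aggregate load of every edge occurrence in a single indexed scan. It then checks that each supplied entry equals the recomputed sum, that the entries are strictly sorted, and that no entry is superfluous, since every listed edge must have positive load and be hit by some occurrence. So the only table that passes is exactly $\Lambda(\mathcal{P})$ as defined in \Cref{sec-system-model}, and $C_\rho$ is its hash.

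Now suppose a resolver supplies $\widehat{\Lambda}$ with $\mathsf{Hash}(\widehat{\Lambda})=C_\rho$ but $\widehat{\Lambda}\neq\Lambda(\mathcal{P})$. Their encodings then differ, so this pair is a collision for $\mathsf{Hash}$. Under the standard assumption that $\mathsf{Hash}$ (Keccak-256 in the prototype) is collision resistant, a computationally bounded adversary finds such a pair with only negligible probability. I also need that $C_\rho$ cannot be overwritten between admission and resolution. This follows because receipt storage is written only by $\text{transact}$/$\text{seedPay}$, and the blockchain's integrity is assumed in \Cref{sec-adversary-model}.

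\emph{Recovery.} The admission call is a finalized transaction, and its calldata includes $\Lambda(\mathcal{P})$ verbatim. Any party can read it from the public ledger and locate it through the receipt identifier or an emitted event, then check it against $C_\rho$ by recomputing the hash. This uses only finalized chain data, not the pathfinder. As a fallback, the party can recompute $\Lambda$ from the paths $\mathcal{P}$ in the same calldata by aggregating the loads, and the result must agree by the admission check.

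The main obstacle is the binding step's reliance on the encoding. I must make sure the hashed encoding is injective (for example, $\mathsf{abi.encode}$ rather than packed encoding, to avoid ambiguity) so that binding reduces cleanly to collision resistance. The guarantee is therefore computational, with negligible failure probability, rather than unconditional, and the statement should be read with that caveat.
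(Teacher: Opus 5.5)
Your proposal is correct and follows the paper's own argument: admission checks the witness against $\mathcal{P}$ before committing to it, the canonical encoding turns any second opening into a collision in $\mathsf{Hash}$, and recovery reads the witness from the finalized admission calldata (or load events) without relying on the pathfinder. Your extra points make explicit what the paper leaves implicit: the encoding must be injective, the stored commitment is immutable, and the guarantee is computational, resting on collision resistance.
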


\begin{proof}
At admission, \scname verifies the submitted witness against $\mathcal{P}$ and stores a commitment to its encoding in receipt $\rho$. Because the encoding is canonical, accepting a different witness for the same commitment would yield a collision in $\mathsf{Hash}$. The admitted witness appears in the finalized admission calldata and load events, so any chain-data reader can reconstruct and submit it during sender feedback or timeout resolution.
\end{proof}

\begin{lemma}[Capacity-accounting invariant] \label{lem:capacity-accounting}
For every edge $e$ and finalized time $t$, $a_t(e)\geq 0$, $r_t(e)\geq 0$, and $c_t(e)=a_t(e)+r_t(e)$. Reserved capacity $r_t(e)$ equals the sum of admitted ordinary route loads and direct TrustRoot-edge loads on $e$ whose sender roles remain unresolved. Admission of either payment type atomically moves its supporting loads from available to reserved capacity without changing total capacity. Every sender-role resolution removes its reservation exactly once, either returning it to available capacity or consuming it from total capacity.
\end{lemma}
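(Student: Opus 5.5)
The plan is induction over the sequence of finalized contract transitions, since the blockchain totally orders and atomically executes every state-changing call. The invariant $\mathcal{I}_t$ is the conjunction in the statement: $a_t(e)\geq 0$, $r_t(e)\geq 0$, $c_t(e)=a_t(e)+r_t(e)$, and $r_t(e)=\sum_{\rho\in\mathcal{U}_t}\ell_e(\rho)$. Here $\mathcal{U}_t$ is the set of admitted receipts whose sender role is unresolved at $t$. For an ordinary receipt, $\ell_e(\rho)$ is the aggregate route load recorded in $\Lambda(\mathcal{P})$. For a seed receipt, it is $W$ on $e_{r,d}$ and zero elsewhere.

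The base case is setup. Only TrustRoot edges receive capacity, all reserved capacities are zero, and $\mathcal{U}_0=\emptyset$, so $\mathcal{I}_0$ holds trivially. The identity $c=a+r$ is definitional; the real content is that every transition updates $a$, $r$, $c$ consistently.

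For the inductive step I will do a case analysis over the interface operations.

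\emph{Register} and \emph{receiver feedback} touch no reservations. Negative receiver feedback subtracts $\min\{W,a_t(d,s)\}$ from both $a$ and $c$ on $e_{d,s}$, so $a$ stays nonnegative and $r$ is unchanged.

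\emph{joinProvider} adds $W$ to both $a$ and $c$ on $e_{r,d}$.

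\emph{transact} and \emph{seedPay} are admissions. The contract checks $\ell_e\leq a_t(e)$ on every touched edge before any update, and the whole transition reverts otherwise. It then moves $\ell_e$ from $a$ to $r$, which leaves $c$ fixed and keeps $a\geq 0$. It also adds the new receipt to $\mathcal{U}$, which preserves the sum formula. For ordinary routes, I rely on the witness verification to guarantee that the loads moved are exactly $\ell_e(\mathcal{P})$. This is the route-soundness check of \Cref{sec-route-witness-commit}.

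\emph{Sender feedback or timeout} subtracts $\ell_e(\rho)$ from $r$. It either adds that amount back to $a$ (positive or neutral) or subtracts it from $c$ (negative). Positive feedback may additionally add $W$ to both $a$ and $c$ of $e_{s,d}$. The receipt leaves $\mathcal{U}$, and the per-role "final" flag guarantees this happens exactly once.

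The main obstacle is showing that resolution subtracts exactly the loads reserved at admission, so that $r$ stays nonnegative and the sum identity survives. I will handle it with \Cref{lem:witness-commit}. Any witness accepted to open $\rho$ equals $\Lambda(\mathcal{P})$. Hence the subtracted vector is exactly $\ell(\rho)$. Because $\rho\in\mathcal{U}_t$, the inductive hypothesis gives $r_t(e)\geq \ell_e(\rho)$, so $r_{t+1}(e)\geq 0$.

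For seed receipts, no witness is needed: the stored constant metadata fixes $(e_{r,d},W)$. A secondary point is edges of total capacity reduced by negative feedback. Here I must check that the reduction is taken from the reserved part only, so $a$ is untouched. I will also argue that receiver-side reductions act on available capacity only, so they cannot interfere with outstanding reservations.

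Finally, atomic reverts mean failed calls leave $\mathcal{I}$ intact, which completes the induction.
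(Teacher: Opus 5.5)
Your proposal is correct and follows the paper's proof: induction over the totally ordered contract transitions with a case split on the interface operations, using \Cref{lem:witness-commit} to tie each routed sender resolution to exactly the admitted loads, immutable receipt metadata for seed receipts, and role-finalization flags for exactly-once resolution. You also state explicitly that $r_t(e)\geq\ell_e(\rho)$ for $\rho\in\mathcal{U}_t$, which gives $r_{t+1}(e)\geq 0$ after resolution; the paper leaves this step implicit, and it does not change the argument.
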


\begin{proof}
The claim follows by induction over the totally ordered contract transitions. Setup initializes non-negative available capacities and zero reservations. A successful provider join adds $W$ to the available and total capacity of $e_{r,d}$, with $r$ a TrustRoot, while leaving its reserved capacity unchanged (\Cref{sec-protocol-bootstrap}). 
Before admitting a routed payment, \scname checks $\ell_e(\mathcal{P})\leq a_t(e)$ for every route edge and then atomically applies $a_{t+1}(e)=a_t(e)-\ell_e(\mathcal{P})$ and $r_{t+1}(e)=r_t(e)+\ell_e(\mathcal{P})$, preserving nonnegativity and total capacity. Bootstrap admission applies the same transition with load $W$ on $e_{r,d}$ after checking $a_t(r,d)\geq W$. A later routed sender-role resolution opens the receipt to exactly its admitted loads by \Cref{lem:witness-commit}; a bootstrap resolution obtains its single edge and load from immutable receipt metadata. Positive, neutral, and timeout resolutions release each load by decreasing $r_t(e)$ and increasing $a_t(e)$ by the same amount, while a negative resolution decreases both $r_t(e)$ and $c_t(e)$ and leaves $a_t(e)$ unchanged. For either payment type, positive sender feedback additionally increases $a_t(s,d)$ and $c_t(s,d)$ by $W$ on the sender-to-receiver edge. Receipt-type and role-finalization checks ensure that exactly one applicable transition is executed. Receiver feedback either preserves capacity or decreases available and total capacity by the same nonnegative amount. All other transitions leave the invariant unchanged.
\end{proof}

\subsection{Exposure and Depletion Bounds}

The security bounds use the following relationship between a route and the directed cuts it crosses. Recall that $\mathbb{H}$ is the honest set, $\mathbb{A}$ is the adversarial set, and their direct cut is denoted by $\delta_t(\mathbb{H},\mathbb{A})$.

\begin{lemma}[Cut-crossing loads] \label{lem:cut-crossing}
Fix a time $t$, a partition $(\mathbb{H},\mathbb{A})$, and a route $\mathcal{P}=\{(P_i,w_i)\}_{i=1}^k$ in $\mathbb{G}_t$ with value $W$. If its sender is in $\mathbb{H}$ and its receiver is in $\mathbb{A}$, then
\[
    \sum_{e\in\delta_t(\mathbb{H},\mathbb{A})}\ell_e(\mathcal{P})\geq W.
\]
If its sender is in $\mathbb{A}$, then the load on honest-to-honest edges is bounded by
\[
    \sum_{e=(u,v):u,v\in\mathbb{H}}\ell_e(\mathcal{P})
    \leq
    (H_{\max}-1)
    \sum_{e\in\delta_t(\mathbb{A},\mathbb{H})}\ell_e(\mathcal{P}).
\]
\end{lemma}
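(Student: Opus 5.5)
The plan is to reduce both inequalities to per-path statements and then sum them with the weights $w_i$. The key identity is
\[
\sum_{e\in S}\ell_e(\mathcal{P})=\sum_{i=1}^k w_i\,\bigl|P_i\cap S\bigr|
\]
for any edge set $S$. Here $|P_i\cap S|$ counts the edges of $P_i$ lying in $S$; since each $P_i$ is simple, an edge occurs at most once in a given path. This follows from the definition of aggregate load by exchanging the order of summation. Each claim then follows from a combinatorial bound on $|P_i\cap S|$ for a single simple directed path, and the $w_i>0$ with $\sum_i w_i=W$ let me add these bounds up.

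For the first claim, fix a path $P_i$ from a sender in $\mathbb H$ to a receiver in $\mathbb A$. Walk along its vertex sequence $v_0,\dots,v_m$ with $v_0\in\mathbb H$ and $v_m\in\mathbb A$. Let $j$ be the least index with $v_{j}\in\mathbb A$; then $j\ge 1$ and $v_{j-1}\in\mathbb H$. Since $P_i$ is a path in $\mathbb G_t$, the edge $(v_{j-1},v_j)$ belongs to $\mathbb E_t$, and hence to $\delta_t(\mathbb H,\mathbb A)$. So $|P_i\cap\delta_t(\mathbb H,\mathbb A)|\ge 1$. Summing with weights gives $\sum_{e\in\delta_t(\mathbb H,\mathbb A)}\ell_e(\mathcal{P})\ge\sum_i w_i=W$.

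For the second claim, fix $P_i$ with sender $v_0\in\mathbb A$ and let $S_{HH}$ be the set of honest-to-honest edges. If $P_i$ has no edge in $S_{HH}$, the per-path inequality $|P_i\cap S_{HH}|\le (H_{\max}-1)\,|P_i\cap\delta_t(\mathbb A,\mathbb H)|$ holds trivially. Otherwise $P_i$ contains some vertex in $\mathbb H$. Because $v_0\in\mathbb A$, the first honest vertex on the path is entered through an edge of $\delta_t(\mathbb A,\mathbb H)$, which gives $|P_i\cap\delta_t(\mathbb A,\mathbb H)|\ge1$. Moreover, that edge is not in $S_{HH}$, and $|P_i|\le H_{\max}$, so at most $H_{\max}-1$ edges of $P_i$ are honest-to-honest. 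Hence $|P_i\cap S_{HH}|\le H_{\max}-1\le (H_{\max}-1)\,|P_i\cap\delta_t(\mathbb A,\mathbb H)|$. Multiplying by $w_i$ and summing over $i$ yields the stated bound.

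I expect the only delicate point to be the bookkeeping that justifies the exchange of summation, and in particular the reading of the displayed definition $\ell_e(\mathcal{P})=\sum_{i: e\in P_i}w_i$. The per-path arguments themselves are elementary. Simplicity of $P_i$ ensures each edge is counted once per path; without it the per-path count $|P_i\cap S_{HH}|$ could still be bounded by the $H_{\max}$ edge-occurrence limit. A coarser bound of $H_{\max}-1$ per path suffices, so there is no need to track multiple crossings back and forth across the cut.
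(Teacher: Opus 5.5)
Your proposal is correct and follows essentially the paper's argument. The first inequality uses a first-crossing edge on each path. For the second, a path containing an honest-to-honest edge must also contain at least one $\mathbb A$-to-$\mathbb H$ crossing edge, which is not honest-to-honest, so $h_i\le H_{\max}-1\le(H_{\max}-1)q_i$; weighting by $w_i$ and exchanging the path and edge sums finishes. The paper instead writes $h_i+q_i\le H_{\max}$, giving $h_i\le H_{\max}-q_i\le(H_{\max}-1)q_i$, which is the same bound.
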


\begin{proof}
For the first inequality, every path from $\mathbb{H}$ to $\mathbb{A}$ crosses the cut $\delta_t(\mathbb{H},\mathbb{A})$ at least once, so path $P_i$ contributes at least $w_i$ to the cut load. Summing gives at least $\sum_iw_i=W$. For the second inequality, let $h_i$ be the number of edges in $P_i$ whose two endpoints are honest, and let $q_i$ be its number of $\mathbb{A}$-to-$\mathbb{H}$ crossings. These are disjoint sets of edges. The bounded-stretch routing rule gives $h_i+q_i\leq |P_i|\leq H_{\max}$. If $h_i>0$, then $q_i\geq1$ because $P_i$ begins in $\mathbb{A}$, so $h_i\leq H_{\max}-q_i\leq(H_{\max}-1)q_i$. The inequality is immediate when $h_i=0$. 
Multiplying by $w_i$, summing over paths, and exchanging the path and edge sums yields the result.
\end{proof}

The following theorem adapts Bazaar's cut argument~\cite{post2011bazaar} to \sysname's directed trust graph and concurrent contract execution. Atomic reservation ensures that the same unit of cut capacity cannot support multiple payments.
Note that the exposure $F_{\mathbb H}(t)$ is state-local: it covers unresolved payments and additional payments jointly admissible from state $t$, but excludes losses from receipts already resolved before $t$.

\begin{theorem}[Cut-bounded exposure] \label{thm:cut-bounded-exposure}
\sysname satisfies cut-bounded exposure as defined in \Cref{def:cut-bounded-exposure}.
\end{theorem}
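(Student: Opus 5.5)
We first pin down what $F_{\mathbb H}(t)$ counts. The note before the theorem makes it state-local. It is the total value of payments from honest senders to adversarial receivers that are either (i) admitted at or before $t$ with sender roles still unresolved, or (ii) admissible jointly, in some order, from state $t$ onward. Seed payments whose provider is adversarial count too, since they are also value at risk. The plan is to charge every such payment to capacity on edges of $\delta_t(\mathbb H,\mathbb A)$ and show that no unit of cut capacity is charged twice. Payments into $\mathbb A$ from adversarial senders carry no honest loss, so they are ignored.

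\textbf{Step 1 (unresolved payments).} Let $\rho$ be an unresolved receipt with honest sender and adversarial receiver, admitted at time $t'\le t$ with route $\mathcal P$. By \Cref{lem:cut-crossing}, $\sum_{e\in\delta(\mathbb H,\mathbb A)}\ell_e(\mathcal P)\ge W_\rho$. For a seed payment $s\to r\to d$ with $r\in\mathbb T\subseteq\mathbb H$ and $d\in\mathbb A$, the edge $e_{r,d}$ is itself a cut edge carrying load $W_\rho$. By \Cref{lem:capacity-accounting}, each such load is still held in $r_t(e)$, and $r_t(e)$ is exactly the sum of unresolved loads. Summing over all such receipts gives
\[
\sum_{\rho\ \text{unresolved}} W_\rho\le\sum_{e\in\delta_t(\mathbb H,\mathbb A)} r_t(e).
\]
The edge set can change between $t'$ and $t$. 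However, an edge carrying positive reservation cannot vanish: its total capacity stays at least $r_t(e)$, and edge endpoints and the partition are fixed. So these cut edges at $t'$ are still cut edges at $t$.

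\textbf{Step 2 (future payments).} Fix any sequence of admissions after $t$ of honest-to-adversarial payments. Look at how available capacity on cut edges evolves. Each admission lowers $\sum_{e\in\delta(\mathbb H,\mathbb A)} a(e)$ by at least its value $W$, by \Cref{lem:cut-crossing} and the atomic reservation in \Cref{lem:capacity-accounting}. Available capacity is nonnegative, so the total value admitted is at most $\sum_{e\in\delta_t(\mathbb H,\mathbb A)} a_t(e)$. Adding this to Step 1 and using $c_t=a_t+r_t$ gives $F_{\mathbb H}(t)\le C_t(\mathbb H,\mathbb A)$.

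\textbf{Main obstacle.} The hard part is Step 2, because the adversary can also move first and enlarge cut available capacity. It can release its own reservations through feedback, buy provider grants, or make honest senders give positive feedback. I would treat all such increases as excluded from the snapshot exposure at $t$: by definition they are captured by $C_{t'}$ at a later state $t'$, as the paper's \emph{Exposure over time} discussion indicates. The remaining check is that no transition initiated solely by $\mathbb A$ increases $a$ or $c$ on honest-sourced cut edges. Go through the transitions one by one:
\begin{itemize}
\item Positive sender feedback creates only the edge $e_{s,d}$ with $s$ the sender. If $s\in\mathbb A$, this edge is not in $\delta(\mathbb H,\mathbb A)$.
\item Receiver feedback only decreases capacity.
\item Adversarial edges, identities, and internal payments add no edges with tail in $\mathbb H$.
\item Provider grants are paid, honest-root edges, and they count toward the cut once made.
\end{itemize}
Together these establish the Sybil-invariance that the corollary relies on, and they close the argument.
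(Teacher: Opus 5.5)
Your argument is correct and matches the paper's proof. Unresolved honest-to-adversarial payments (including seed payments through the honest-rooted edge $e_{r,d}$) are charged via the cut-crossing lemma and the accounting invariant to $r_t$ on $\delta_t(\mathbb H,\mathbb A)$, further jointly admissible payments are charged to $a_t$, and the two sum to $C_t(\mathbb H,\mathbb A)$ under the same state-local reading of $F_{\mathbb H}(t)$.

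Your closing ``remaining check'' should be dropped. It is unnecessary once post-$t$ increases are excluded by definition, and as stated it is false. An adversarial sender's positive or neutral feedback (e.g., on a seed payment to an adversarial provider) releases its own reservation on a cut edge and raises $a$ there, although this is harmless because that capacity was already counted in $r_t$. A provider grant also raises $c$, as your own last bullet concedes.
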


\begin{proof}
Fix a time $t$. By \Cref{lem:cut-crossing}, every ordinary routed payment from an honest sender to an adversarial receiver places at least its value in aggregate load on $\delta_t(\mathbb{H},\mathbb{A})$. A bootstrap payment to an adversarial receiver $d$ reserves its value $W$ on $e_{r,d}$, which crosses the same cut because the TrustRoot $r$ is honest. Provider joins to adversarial recipients can add TrustRoot-sourced cut capacity; any grant still available or reserved at $t$ is counted by $a_t(e)+r_t(e)$. The cut loads supporting admitted payments whose sender roles remain unresolved are included in $r_t(e)$. Any additional payment exposure jointly admissible from state $t$ must fit within $a_t(e)$ on every cut edge. Atomic admission prevents concurrent payments of either type from reusing available capacity, while \Cref{lem:capacity-accounting} prevents reserved capacity from supporting another payment before resolution. The exposure from payments whose sender roles remain unresolved at $t$ and further payments jointly admissible from state $t$ is therefore at most the sum of available and reserved capacity across the cut; losses from receipts resolved before $t$ are excluded. Hence $F_{\mathbb H}(t)\leq\sum_{e\in\delta_t(\mathbb H,\mathbb A)}(a_t(e)+r_t(e))=C_t(\mathbb H,\mathbb A)$.
\end{proof}

\begin{corollary}[Sybil invariance] \label{cor:sybil-invariance}
Holding the honest-to-adversarial cut capacity fixed, adding adversarial identities or changing edges and capacities entirely within $\mathbb{A}$ leaves the aggregate exposure bound in \Cref{thm:cut-bounded-exposure} unchanged.
\end{corollary}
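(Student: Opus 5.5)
The plan is to read the corollary directly off \Cref{thm:cut-bounded-exposure}. That theorem bounds $F_{\mathbb H}(t)$ by $C_t(\mathbb H,\mathbb A)=\sum_{e\in\delta_t(\mathbb H,\mathbb A)}(a_t(e)+r_t(e))$. The bound therefore depends on the state only through the total capacities $c_t(e)$ of edges whose tail lies in $\mathbb H$ and whose head lies in $\mathbb A$. So it suffices to show that each operation listed in the statement leaves both the edge set $\delta_t(\mathbb H,\mathbb A)$ and the values $c_t(e)$ on it unchanged.

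I would handle the cases in the following order.

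\emph{Adding adversarial identities.} Registering a new agent $s$ and placing it in $\mathbb A$ creates a vertex with no incident honest-sourced edges. This follows from $\network.\text{register}$, which creates no edges. Consequently $\delta_t(\mathbb H,\mathbb A\cup\{s\})=\delta_t(\mathbb H,\mathbb A)$.

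\emph{Changing edges and capacities inside $\mathbb A$.} Any edge $e_{u,v}$ with $u,v\in\mathbb A$ is by definition not in $\delta_t(\mathbb H,\mathbb A)$, so its capacity does not enter the sum. This covers positive feedback and transactions among adversarial agents, whether these create, increase, reserve, release or consume capacity.

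In both cases the honest-to-adversarial cut capacity is fixed by hypothesis. The right-hand side of \Cref{thm:cut-bounded-exposure} is thus literally the same number, and the theorem's bound applies to the modified state unchanged.

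The one subtle point is to confirm that such adversary-internal operations cannot indirectly alter cut edges. Two mechanisms could in principle do so: internal routes that pass through honest vertices, and receiver feedback that touches cut edges. I would argue this is excluded by the hypothesis. Any operation whose route or feedback modifies an edge with an honest tail is not ``entirely within $\mathbb A$''. The corollary explicitly holds the cut capacity fixed, so such operations fall outside its scope. I would also note that \Cref{lem:capacity-accounting} guarantees that $c_t$ changes on an edge only when a transition touches that edge. This justifies the claim that operations confined to $\mathbb A$-internal edges leave every $c_t(e)$ with $e\in\delta_t(\mathbb H,\mathbb A)$ intact.
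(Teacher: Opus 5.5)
Your proposal is correct and follows the paper's proof. The paper simply notes that these changes do not alter the capacity of $\delta_t(\mathbb{H},\mathbb{A})$, so the bound $C_t(\mathbb{H},\mathbb{A})$ is unchanged and applies jointly to all adversarial receivers; your case split (registration creates no edges, and $\mathbb{A}$-internal edges lie outside the cut) spells out that same observation, with one small quibble: the locality fact you attribute to \Cref{lem:capacity-accounting} (that $c_t$ changes only on edges a transition touches) comes from the transition rules rather than from that lemma's statement, but it is immediate and in any case subsumed by the hypothesis that the cut capacity is held fixed.
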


\begin{proof}
These changes do not alter the capacity of $\delta_t(\mathbb{H},\mathbb{A})$. The bound $C_t(\mathbb{H},\mathbb{A})$ therefore remains unchanged and applies jointly to all adversarial receivers, regardless of their number or internal trust structure.
\end{proof}

\Cref{cor:sybil-invariance} implies \sysname's Sybil-resilience. Specifically, creating identities or collusive interactions within an adversarial group cannot increase the bound on honest agents' aggregate payment exposure, which is determined by the trust capacity they extend to that group.

An adversarial provider can increase the exposure bound through an explicit TrustRoot grant (i.e., via the provider joining process, described in \Cref{sec-protocol-bootstrap}), but this is not free and is under control. A grant of $W$ charges an irrecoverable fee of $W$ from the adversarial provider, adds $W$ to the honest-to-adversarial cut capacity, and deducts $W$ from the TrustRoot's preauthorized grant budget. The system can make the fees irrecoverable by burning them, and the TrustRoot only replenishes its grant budget periodically, which controls the rate at which new capacity can be added.
As a result, splitting a total joining fee across Sybil identities cannot obtain more bootstrap capacity than that total fee permits. Together with \Cref{cor:sybil-invariance}, creating Sybil identities along cannot bring the adversary more benefits in defrauding honest agents without paying the same amount of fees.



\begin{lemma}[Bootstrap depletion] \label{lem:bootstrap-depletion}
Let $D^{\mathrm{boot}}_{\mathbb H}(t)$ denote the cumulative capacity consumed by time $t$ on honest-to-honest edges by negative sender feedback on seed payments initiated in $\mathbb A$, and let $P^{\mathrm{boot}}_{\mathbb A\rightarrow\mathbb H}(t)$ denote the cumulative value of seed payments from adversarial senders to honest receivers by $t$. Seed payments are admitted through $\network.\text{seedPay}$ (\Cref{sec-protocol-bootstrap}). Then
\[
    D^{\mathrm{boot}}_{\mathbb H}(t)
    \leq P^{\mathrm{boot}}_{\mathbb A\rightarrow\mathbb H}(t).
\]
\end{lemma}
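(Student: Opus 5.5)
The plan is to bound each seed payment individually and then sum over seed payments. Write $\rho$ for a seed receipt created by $\network.\text{seedPay}(s,r,d,W)$ with sender $s\in\mathbb A$. The setup mirrors the second part of \Cref{lem:cut-crossing}, with the logical path $s\rightarrow r\rightarrow d$ playing the role of a two-edge route. Its first hop is logical and carries no capacity or reservation. The only capacity-bearing edge is $e_{r,d}$.

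\textbf{Step 1: a seed payment reserves only $W$ on $e_{r,d}$.} By \Cref{lem:capacity-accounting}, admission of $\rho$ atomically moves exactly $W$ from $a_t(r,d)$ to $r_t(r,d)$. Sender-role resolution then removes this reservation exactly once, using the edge and load stored in immutable receipt metadata. Positive, neutral, and timeout resolutions release the load and consume nothing. Receiver feedback never touches reserved capacity. Therefore the only transition attributable to $\rho$ that contributes to $D^{\mathrm{boot}}_{\mathbb H}(t)$ is a negative sender resolution. That resolution decreases $c(e_{r,d})$ by exactly $W$ and no other edge.

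\textbf{Step 2: case split on the receiver.} Since $r\in\mathbb T\subseteq\mathbb H$, the edge $e_{r,d}$ is honest-to-honest exactly when $d\in\mathbb H$.
\begin{itemize}
\item If $d\in\mathbb A$, then $\rho$ contributes $0$ to $D^{\mathrm{boot}}_{\mathbb H}$ and $0$ to $P^{\mathrm{boot}}_{\mathbb A\rightarrow\mathbb H}$.
\item If $d\in\mathbb H$, then the protocol transferred $W$ from $s\in\mathbb A$ directly to $d\in\mathbb H$ at admission. So $\rho$ contributes exactly $W$ to $P^{\mathrm{boot}}_{\mathbb A\rightarrow\mathbb H}(t)$ as soon as it is admitted. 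Its depletion contribution is at most $W$, and that contribution only arises at a later negative resolution.
\end{itemize}
Either way, the per-receipt depletion contribution is at most the per-receipt payment contribution at every time $t$.

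\textbf{Step 3: sum over receipts.} Summing over all seed receipts with adversarial senders admitted by $t$ gives the claim. This works because both quantities are cumulative sums indexed by receipts. Each receipt's payment is counted at admission, which precedes its resolution in the total order.

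\textbf{Expected obstacle.} The main care point is showing that the depletion charged to $\rho$ is exactly its reserved $W$ and cannot be applied twice. This follows from role finality and \Cref{lem:capacity-accounting}: each sender role resolves once, and receipt-type checks fix the load to the metadata value. A second point is that positive sender feedback creates $e_{s,d}$ from an adversarial sender. That is an $\mathbb A$-to-$\mathbb H$ edge, so it is not counted in $D^{\mathrm{boot}}_{\mathbb H}$. Any later depletion flowing through it belongs to ordinary routed payments and is handled elsewhere.
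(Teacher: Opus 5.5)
Your proposal is correct and follows essentially the same argument as the paper. A negatively resolved seed payment consumes exactly $W$ on $e_{r,d}$, and this is honest-to-honest only when $d\in\mathbb H$, in which case $W$ was already paid from $\mathbb A$ to $\mathbb H$ at admission; role finality prevents double counting. The paper packages your per-receipt summation as the auxiliary quantity $Q(t)$, showing $D^{\mathrm{boot}}_{\mathbb H}(t)=Q(t)\leq P^{\mathrm{boot}}_{\mathbb A\rightarrow\mathbb H}(t)$.
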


\begin{proof}
A negatively resolved seed payment of value $W$ initiated in $\mathbb A$ consumes only $W$ on $e_{r,d}$. Since $r$ is honest, this is honest-to-honest depletion exactly when $d$ is honest; in that case, its adversarial sender has transferred $W$ to an honest receiver at admission. If $d$ is adversarial, the consumed edge is outside the measured honest-to-honest subgraph. Let $Q(t)$ denote the cumulative value of seed payments from $\mathbb A$ to $\mathbb H$ whose sender roles have resolved negatively by $t$. Each sender role resolves at most once. Only negative sender feedback consumes the reserved capacity; positive feedback, neutral feedback, and timeout resolution release it back to available capacity. Therefore, $D^{\mathrm{boot}}_{\mathbb H}(t)=Q(t)$. Moreover, every payment counted by $Q(t)$ was admitted by $t$ and is therefore also counted in $P^{\mathrm{boot}}_{\mathbb A\rightarrow\mathbb H}(t)$. Hence $D^{\mathrm{boot}}_{\mathbb H}(t)=Q(t)\leq P^{\mathrm{boot}}_{\mathbb A\rightarrow\mathbb H}(t)$, proving the bound.
\end{proof}

\begin{theorem}[Bounded depletion amplification] \label{thm:bounded-depletion}
\sysname provides $(H_{\max}-1)$-bounded depletion amplification as defined in \Cref{def:bounded-depletion}.
\end{theorem}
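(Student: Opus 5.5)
We split $D_{\mathbb H}(t)$ by payment type, $D_{\mathbb H}(t)=D^{\mathrm{boot}}_{\mathbb H}(t)+D^{\mathrm{route}}_{\mathbb H}(t)$. Here $D^{\mathrm{route}}_{\mathbb H}(t)$ is the honest-to-honest capacity consumed by negative sender feedback on ordinary routed payments initiated in $\mathbb A$. \Cref{lem:bootstrap-depletion} already bounds the first term by the value of seed payments from $\mathbb A$ to $\mathbb H$, and in fact by the value of those seed payments that resolved negatively (the quantity $Q(t)$ in its proof). The work is therefore in bounding $D^{\mathrm{route}}_{\mathbb H}(t)$ and then recombining the two terms without double counting.

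For the routed term, fix the partition and sum over every negatively resolved routed receipt whose sender is in $\mathbb A$. By \Cref{lem:witness-commit} and \Cref{lem:capacity-accounting}, a negative resolution consumes exactly the admitted loads $\ell_e(\mathcal P)$ from total capacity. By the second inequality of \Cref{lem:cut-crossing}, applied to the route at its admission time, the honest-to-honest part of these loads is at most $(H_{\max}-1)$ times the load on $\mathbb A$-to-$\mathbb H$ edges. This gives
\[
D^{\mathrm{route}}_{\mathbb H}(t)\le (H_{\max}-1)\,I(t),
\]
where $I(t)$ is the cumulative capacity consumed on $\delta(\mathbb A,\mathbb H)$ edges by these negative resolutions. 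The partition is fixed over time, so an edge's status as a cut edge does not change between admission and resolution.

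The key step, and the one I expect to be the main obstacle, is the ingress accounting $I(t)\le$ (value of $\mathbb A\to\mathbb H$ payments that resolved positively). I would prove this by induction over contract transitions, using \Cref{lem:capacity-accounting}, and check every transition that can add total capacity to an $\mathbb A$-to-$\mathbb H$ edge:
\begin{itemize}
\item Setup only seeds TrustRoot-sourced edges, and TrustRoots are honest.
\item \texttt{joinProvider} only adds capacity to edges $e_{r,d}$ with $r$ a TrustRoot.
\item Receiver feedback never creates capacity.
\item Positive sender feedback adds exactly $W$ to $e_{s,d}$. This edge lies in $\delta(\mathbb A,\mathbb H)$ exactly when $s\in\mathbb A$ and $d\in\mathbb H$, which means $W$ was already transferred from $\mathbb A$ to $\mathbb H$ at admission. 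This holds for both ordinary and seed payments.
\end{itemize}
Total capacity stays nonnegative, so the cumulative consumption on these edges, over all senders and all negative-feedback and receiver-feedback reductions, is at most their cumulative creation. In particular, $I(t)$ is at most the value $P^{+}(t)$ of positively resolved $\mathbb A\to\mathbb H$ payments.

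To conclude, note that each sender role resolves exactly once. Hence the payments counted in $P^{+}(t)$ (positive resolutions) and in $Q(t)$ (negative seed resolutions) are disjoint, and both are admitted $\mathbb A\to\mathbb H$ payments, so $P^{+}(t)+Q(t)\le P_{\mathbb A\rightarrow\mathbb H}(t)$. Since $H_{\max}\ge2$ gives $1\le H_{\max}-1$, we obtain
\[
D_{\mathbb H}(t)\le Q(t)+(H_{\max}-1)P^{+}(t)\le (H_{\max}-1)\,P_{\mathbb A\rightarrow\mathbb H}(t).
\]
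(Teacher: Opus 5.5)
Your proposal is correct and follows essentially the same argument as the paper. You split $D_{\mathbb H}(t)$ into routed and seed contributions, bound the routed part by $(H_{\max}-1)$ times the consumed $\mathbb A$-to-$\mathbb H$ load via \Cref{lem:cut-crossing}, charge that ingress consumption to positively resolved $\mathbb A\to\mathbb H$ payments after checking that no other transition creates such capacity, and recombine with $Q(t)$ using disjointness of sender-role resolutions. Your explicit quantity $I(t)$, and your remark that receiver-feedback reductions only help, are a slightly more detailed rendering of the paper's step that ``each unit of adversarial-to-honest capacity can be permanently consumed at most once.''
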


\begin{proof}
Fix a time $t$ and the honest and adversarial sets $(\mathbb H,\mathbb A)$ of an execution under \Cref{sec-adversary-model}. Let $D^{\mathrm{route}}_{\mathbb H}(t)$ denote the cumulative capacity consumed by time $t$ on honest-to-honest edges by negative sender feedback on ordinary routed payments initiated in $\mathbb A$ and admitted through $\network.\text{transact}$. Together with the seed-payment contribution $D^{\mathrm{boot}}_{\mathbb H}(t)$ defined in \Cref{lem:bootstrap-depletion}, this gives $D_{\mathbb H}(t)=D^{\mathrm{route}}_{\mathbb H}(t)+D^{\mathrm{boot}}_{\mathbb H}(t)$.

For an ordinary routed payment initiated in $\mathbb A$ whose sender role resolves negatively, let $x$ be its aggregate load on edges directed from $\mathbb A$ to $\mathbb H$. By \Cref{lem:cut-crossing}, the route consumes at most $(H_{\max}-1)x$ capacity on honest-to-honest edges. Summed across all such resolutions through $t$, each unit of adversarial-to-honest capacity can be permanently consumed at most once by \Cref{lem:capacity-accounting}.

Let $P^{\mathrm{pos}}_{\mathbb A\rightarrow\mathbb H}(t)$ denote the cumulative value of ordinary routed payments and seed payments from adversarial senders to honest receivers whose sender roles have resolved positively by $t$. Setup creates no adversarial-to-honest capacity because every initial edge originates at an honest TrustRoot (\Cref{sec-protocol-setup}). Provider joining likewise adds only TrustRoot-sourced capacity (\Cref{sec-protocol-bootstrap}). Thereafter, only positive sender feedback on a payment from $\mathbb A$ to $\mathbb H$ creates adversarial-to-honest capacity, adding exactly the payment value. Cumulative consumption on these edges is therefore at most $P^{\mathrm{pos}}_{\mathbb A\rightarrow\mathbb H}(t)$, giving $D^{\mathrm{route}}_{\mathbb H}(t)\leq(H_{\max}-1)P^{\mathrm{pos}}_{\mathbb A\rightarrow\mathbb H}(t)$.

By \Cref{lem:bootstrap-depletion}, $D^{\mathrm{boot}}_{\mathbb H}(t)=Q(t)$, where $Q(t)$ is the cumulative value of seed payments from $\mathbb A$ to $\mathbb H$ whose sender roles have resolved negatively by $t$. These payments are disjoint from those counted by $P^{\mathrm{pos}}_{\mathbb A\rightarrow\mathbb H}(t)$ because each sender role resolves at most once. Thus, $P^{\mathrm{pos}}_{\mathbb A\rightarrow\mathbb H}(t)+Q(t)\leq P_{\mathbb A\rightarrow\mathbb H}(t)$. Since $H_{\max}-1\geq1$,
\[
    \begin{aligned}
        D_{\mathbb H}(t)
        &\leq (H_{\max}-1)P^{\mathrm{pos}}_{\mathbb A\rightarrow\mathbb H}(t)+Q(t) \\
        &\leq (H_{\max}-1)\bigl(P^{\mathrm{pos}}_{\mathbb A\rightarrow\mathbb H}(t)+Q(t)\bigr) \\
        &\leq (H_{\max}-1)P_{\mathbb A\rightarrow\mathbb H}(t).
    \end{aligned}
\]
\end{proof}

The quantity $P_{\mathbb A\rightarrow\mathbb H}(t)$ records the cumulative value transferred through the protocol from adversarial agents to honest agents up to time $t$, rather than the adversary's net economic loss. Service value, refunds, reciprocal payments, and transfers outside the protocol are not deducted. Quantifying these offsets would require additional assumptions about service valuation and the attribution of transfers to particular interactions. Cumulative accounting ensures that capacity accumulated before a later attack remains charged to the paid interactions that created it.

\subsection{Integrity and Liveness}



\begin{theorem}[Route soundness] \label{thm:route-soundness}
\sysname satisfies route soundness as defined in \Cref{def:route-soundness}.
\end{theorem}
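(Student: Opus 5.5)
The plan is to split route soundness (\Cref{def:route-soundness}) into two obligations and discharge each from the admission checks of \Cref{sec-route-witness-commit} together with \Cref{lem:witness-commit} and \Cref{lem:capacity-accounting}. The first obligation is \emph{feasibility at admission}: every admitted route is feasible in the contract state immediately before admission. The second is \emph{exact reservation and resolution}: the protocol reserves, and later releases or consumes, exactly the route's aggregate edge loads.

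For feasibility, I will walk through the checks \scname performs in $\network.\text{transact}$. It checks that $1\le k\le K_{\max}$ and $|P_i|\le H_{\max}$. It checks that each $P_i$ starts at $s$, ends at $d$, traverses existing directed edges of $\mathbb{G}_t$, and is simple, with simplicity certified by the sorted node-order certificate. It checks that $w_i>0$ and $\sum_i w_i=W$.

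The key step is to show that the single-scan verification forces the supplied table to equal $\Lambda(\mathcal{P})$. The argument has three parts:
\begin{itemize}
\item Each edge occurrence carries an index, and the contract checks that the indexed witness entry names that same edge. So every occurrence is accumulated into the entry for its own edge.
\item The table is checked to be strictly lexicographically sorted. So no edge appears in two entries, and the accumulated totals equal $\ell_e(\mathcal{P})$ for each listed edge.
\item The contract compares the recomputed totals to the claimed loads. Any entry that no occurrence touched would have recomputed load $0$, contradicting the requirement that witness loads be positive. So the table contains no extra edges, and it misses none, because every occurrence is indexed into some entry.
\end{itemize}
With the witness identified as $\Lambda(\mathcal{P})$, the per-edge check $\ell_e(\mathcal{P})\le a_t(e)$, performed against the pre-admission state inside one atomic transaction, gives feasibility directly. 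Blockchain total ordering means no interleaving transition can change $a_t(e)$ between the check and the update. Any failed check reverts, so non-feasible routes are never admitted.

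For exact reservation, the admission update moves $\ell_e(\mathcal{P})$ from $a_t(e)$ to $r_t(e)$ for exactly the entries of the verified witness, which by the previous step are exactly the route's distinct edges. At resolution, \Cref{lem:witness-commit} guarantees that any witness opening the receipt equals $\Lambda(\mathcal{P})$, so the release or consumption is applied to the same loads. \Cref{lem:capacity-accounting} then gives that this happens exactly once per sender role, either returning the loads to available capacity or removing them from total capacity. Seed payments are handled separately: their single edge $e_{r,d}$ and load $W$ come from immutable receipt metadata, after admission checked $a_t(r,d)\ge W$.

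The main obstacle is the witness-equality step: showing that the index-plus-sortedness check rules out every malformed table. The cases to exclude are duplicated edges, phantom entries, and split loads. I would handle them by insisting on strict sortedness together with positivity of each claimed load, and then argue by a counting/bijection between occurrences and entries. If the implementation does not itself require positive claimed loads, I would instead note that a zero-load phantom entry reserves nothing, so it does not violate exact accounting.
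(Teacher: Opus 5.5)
Your proposal is correct and follows essentially the same route as the paper: the admission checks plus the per-edge test $\ell_e(\mathcal{P})\le a_t(e)$ against the pre-admission state give feasibility, the atomic transition of \Cref{lem:capacity-accounting} gives exact reservation, and \Cref{lem:witness-commit} gives exact resolution. Your explicit argument that the index, strict-sortedness and recomputation checks force the supplied table to equal $\Lambda(\mathcal{P})$ spells out a step the paper only asserts when it says the contract ``independently recomputes the aggregate loads represented by the route witness.''
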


\begin{proof}
For each routed payment, \scname checks the path-count and hop-count bounds, registered endpoints, edge continuity, path simplicity, positive path values, $\sum_iw_i=W$, and the unique-edge bound (\Cref{sec-protocol-transaction}). It independently recomputes the aggregate loads represented by the route witness and checks $\ell_e(\mathcal{P})\leq a_t(e)$ against the contract state immediately before admission for every route edge. A failed check reverts without changing the state. If all checks pass, the atomic transition in \Cref{lem:capacity-accounting} reserves exactly these loads and leaves every $a_{t+1}(e)\geq0$. Therefore, every admitted routed transaction is feasible at its atomic admission point and cannot overbook an edge.
In addition, by \Cref{lem:witness-commit}, subsequent sender feedback or timeout resolution accepts only the admitted route witness and therefore releases or consumes exactly the corresponding reservation. An untrusted or stale pathfinder output can consequently cause admission to fail, but it cannot induce an invalid trust-network transition. These facts establish route soundness as defined in \Cref{def:route-soundness}.
\end{proof}

\begin{theorem}[Eventual resolution] \label{thm:eventual-resolution}
\sysname satisfies eventual resolution as defined in \Cref{def:eventual-resolution}.
\end{theorem}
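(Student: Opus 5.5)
To prove the theorem, I would show two things for an arbitrary admitted transaction with receipt $\rho$. First, its sender role is eventually finalized. Second, finalizing the sender role removes every unit of capacity reserved for it. I would treat ordinary routed payments and seed payments separately, since they differ only in where the resolution data comes from.

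\textbf{Step 1: the sender role is eventually finalized.} Either the sender invokes $\network.\text{feedback}(\rho,\mathsf{snd},f,\widehat{\Lambda})$ before $T_{\mathrm{fb}}$, or the role is still unresolved once $T_{\mathrm{fb}}$ expires. In the second case, the threat model (\Cref{sec-adversary-model}) says an honest client eventually invokes $\network.\text{resolve}(\rho,\mathsf{snd},\widehat{\Lambda})$. The blockchain's liveness, which is part of the black-box model, then includes that call. A malicious sender therefore cannot block resolution by withholding feedback; it can only delay it until the timeout.

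\textbf{Step 2: the resolution call succeeds.} This is the step I expect to be the main obstacle, because the call must be accepted even though nobody is obliged to keep the witness. For a routed receipt, I would invoke \Cref{lem:witness-commit}. The admitted witness $\Lambda(\mathcal{P})$ is recoverable from finalized admission calldata and load events, independently of the pathfinder, and any witness that opens $\rho$ equals it. So the honest resolver can supply an opening that passes the hash check. For a seed receipt, no witness is needed, since the single edge $e_{r,d}$ and load $W$ are stored in immutable receipt metadata.

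I would also argue the call cannot revert for capacity reasons. By \Cref{lem:capacity-accounting}, $r_t(e)$ still contains the loads of this unresolved receipt, so subtracting them keeps $r_{t+1}(e)\ge 0$. The neutral transition only adds back to $a_t(e)$. Concurrent resolutions of other receipts touch disjoint components of $r_t(e)$, so they cannot interfere.

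\textbf{Step 3: nothing remains reserved afterwards.} Every sender-role resolution removes the receipt's reservation exactly once: positive, neutral, and timeout resolution release it, and negative resolution consumes it (\Cref{lem:capacity-accounting}). Role finalization then prevents any second application. Once the sender role is final, the receipt contributes zero to every $r_t(e)$.

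The receiver role also becomes final after timeout, as a no-op. It never held reserved capacity, because receiver feedback only touches the available capacity on the reverse edge. Hence every admitted transaction is eventually fully resolved with no associated reserved capacity, which is \Cref{def:eventual-resolution}.
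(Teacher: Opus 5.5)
Your proposal is correct and follows essentially the same route as the paper's proof. Timeout plus the honest-resolver assumption guarantees that the sender role is finalized, \Cref{lem:witness-commit} (or the seed receipt's constant metadata) supplies the resolution data, and the receiver role never holds reserved capacity; your added observations that the resolution cannot revert for capacity reasons and that \Cref{lem:capacity-accounting} removes the reservation exactly once are sound elaborations that the paper leaves implicit.
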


\begin{proof}
Every admitted routed or seed payment produces a receipt with a feedback deadline set $T_{\mathrm{fb}}$ after admission and separate finalization flags for the sender and receiver roles. Before the deadline, the contract authenticates feedback for the selected role and permits that role to resolve only once. After the deadline, $\network.\text{resolve}(\rho,\eta,\widehat{\Lambda})$ is called by any honest party (\Cref{sec-adversary-model}) for either unresolved role $\eta$ and applies the neutral outcome for that role. If a routed sender role remains unresolved, \Cref{lem:witness-commit} allows the caller to reconstruct and submit its unique route witness, after which the contract returns the corresponding loads to available capacity. If a bootstrap sender role remains unresolved, the contract derives $e_{r,d}$ and $W$ from the receipt and returns that reservation to available capacity without a route witness or capacity creation. Resolving a receiver role requires no route witness and does not affect the sender-governed reservation. Under the maintenance assumption in \Cref{sec-adversary-model}, an honest caller eventually submits every required resolution. Eventual finalization and finalized chain-data availability therefore prevent either a withholding participant or the original pathfinder from preserving a reservation indefinitely.
\end{proof}

\subsection{On-chain Complexity}

Recall that $L(\mathcal{P})=\sum_i|P_i|$ is the total number of edge occurrences in a submitted route, and let $U$ be its number of unique edges.

\begin{proposition}[Verification complexity] \label{prop:verification-complexity}
For every route within the protocol limits, \sysname performs $O(L(\mathcal{P})+U)$ verification work and uses $O(U)$ auxiliary memory. This work is at most $O(K_{\max}H_{\max})$ and is independent of the trust graph's size.
\end{proposition}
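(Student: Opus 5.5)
I would prove the proposition by walking through the admission-time verification procedure of \Cref{sec-route-witness-commit} in the order the contract executes it. For each stage I would charge its work either to the $L=L(\mathcal{P})$ edge occurrences or to the $U$ witness entries. The procedure has four parts: a header check, a single scan over the paths, a pass over the witness table, and the capacity check with reservation.

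\textbf{Header check and path scan.} The encoding exposes $k$ and each $|P_i|$ before the vertex sequences. The contract can therefore reject $k>K_{\max}$ or $|P_i|>H_{\max}$ after $O(1)$ work per path, before reading any path data. This is what makes every accepted input satisfy $L\leq K_{\max}H_{\max}$. During the scan, each edge occurrence $(u,v)$ in $P_i$ is handled in $O(1)$ storage and memory operations:
\begin{itemize}
    \item confirm that the edge exists in contract storage (a mapping lookup, independent of $|\mathbb{V}_t|$ and $|\mathbb{E}_t|$);
    \item read the supplied witness index $j$ and check $j<U$;
    \item check that entry $j$ of $\Lambda$ names exactly $(u,v)$;
    \item add $w_i$ to a temporary memory array at position $j$.
\end{itemize}
Endpoint checks, $w_i>0$, and the running sum $\sum_i w_i=W$ add $O(1)$ per path, which is $O(k)\subseteq O(L)$ in total.

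\textbf{Path simplicity.} This is the step I expect to be the main obstacle: a naive pairwise check costs $O(|P_i|^2)$. I would use the sorted node-order certificate. For each path, the client supplies a permutation $\pi$ of the path positions such that the vertices $v_{\pi(1)},\ldots,v_{\pi(|P_i|+1)}$ are strictly increasing. The contract does three things:
\begin{itemize}
    \item verifies that the certificate entries are distinct valid positions, using a per-path bitmap or a reused stamp array of size $O(H_{\max})$;
    \item verifies that the vertices they index appear in strictly increasing order;
    \item concludes that all vertices on the path are distinct.
\end{itemize}
Each check is linear in $|P_i|$, so the total over all paths is $O(L)$. I would need to argue that the stamp array can be reused across paths without clearing, for example by tagging each slot with the path number. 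That keeps both the time and the extra memory linear in the size of the route.

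\textbf{Witness pass, capacity check, and memory bound.} One pass over $\Lambda$ confirms that the edges are strictly increasing lexicographically, so the table is duplicate-free. The same pass confirms $U\leq U_{\max}$, that each accumulated load equals the claimed $\ell_{e_j}(\mathcal{P})$ and is positive, and that $\ell_{e_j}\leq a_t(e_j)$. Every entry having positive load implies every entry was hit by some occurrence, hence $U\leq L$. The reservation update is $O(1)$ per unique edge. Computing $\mathsf{Hash}(\Lambda)$ over the canonical encoding is linear in its length, which is $O(U)$.

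Summing the stages gives $O(L+U)$ work. The only auxiliary memory is the accumulator array of size $U$ plus the $O(H_{\max})\subseteq O(U+L)$ certificate scratch space. I would phrase the memory claim as $O(U)$ by noting that $|P_i|\leq L$ and that the scratch space can reuse the route calldata buffer, or else I would state it as $O(L+U)$ if that is unavoidable. Finally, $U\leq L\leq K_{\max}H_{\max}$ gives the $O(K_{\max}H_{\max})$ bound. No step touches any edge or vertex that is absent from the route, so the bound is independent of the size of the trust graph.
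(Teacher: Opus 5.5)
Your accounting of the work follows the paper's proof essentially step for step. You charge endpoint, continuity and value checks to the $L(\mathcal{P})$ occurrences, accumulate through the supplied witness indices at $O(1)$ per occurrence, and make one pass over the $U$ sorted records for ordering, load equality, capacity and hashing. You then close with $k\le L(\mathcal{P})$ and $U\le L(\mathcal{P})\le K_{\max}H_{\max}$. Your explicit positivity check is a nice touch, since it is what makes $U\le L(\mathcal{P})$ hold for any accepted witness.

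The one claim you do not actually establish is the $O(U)$ auxiliary memory, which you end up hedging to $O(L+U)$. Neither of your justifications closes it. The inequality $|P_i|\le L(\mathcal{P})$ only bounds the scratch space by $O(L)$, and EVM calldata is read-only, so it cannot host the stamps.

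The missing observation is that your distinctness check on certificate positions, and hence the bitmap or stamp array, is redundant. Suppose the certificate for $P_i$ has exactly $|P_i|+1$ entries, each an in-range position, and the vertex identifiers at those positions are strictly increasing. Then no position can repeat, because a repeat would yield two equal identifiers. By counting, the certificate therefore enumerates every position of $P_i$, and all of its vertices are distinct.

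This is exactly the check in the paper's proof: length, position bounds, and strictly increasing identifiers. It needs only $O(1)$ scratch beyond the $U$ accumulators. So the step you flag as the main obstacle costs $O(|P_i|)$ time with no extra memory, and the $O(U)$ bound holds as stated.

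If you prefer to keep a per-path bitmap of size $|P_i|+1$, it is still $O(U)$ on any accepted route. A simple path's $|P_i|$ distinct edges index distinct entries of the duplicate-free witness, so $|P_i|\le U$.
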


\begin{proof}
Let $k$ denote the number of paths in $\mathcal{P}$. Under the witness-commit mechanism (\Cref{sec-route-witness-commit}), the client supplies the sorted edge-load table $\Lambda(\mathcal{P})$ and the index of the table entry corresponding to each path-edge occurrence. \scname scans each path to check endpoints, continuity, hop count, and payment value. It also scans each path's node-order certificate, checking its length, position bounds, and strictly increasing vertex identifiers to establish path simplicity. Since a path with $|P_i|$ edges has $|P_i|+1$ vertices, these checks require $O(k+L(\mathcal{P}))$ work.

During the route scan, \scname verifies that each indexed witness entry names the corresponding edge and adds the path's value directly to that entry's memory counter. This requires constant work per occurrence and $O(U)$ auxiliary memory for the counters. A scan of the $U$ witness records checks their ordering and uniqueness, compares the claimed loads with the recomputed totals, and checks each edge's available capacity. Hashing the witness for the receipt commitment adds $O(U)$ work since each record has a fixed size.

The total work is therefore $O(k+L(\mathcal{P})+U)=O(L(\mathcal{P})+U)$, since every path contains at least one edge and hence $k\leq L(\mathcal{P})$. Moreover, $U\leq L(\mathcal{P})\leq kH_{\max}\leq K_{\max}H_{\max}$. Route discovery takes place off-chain, while on-chain verification processes only the submitted route, witness, and corresponding edge states. Its cost is therefore bounded by protocol parameters and independent of $|\mathbb V_t|$ and $|\mathbb E_t|$.
\end{proof}

\begin{proposition}[Receipt-state complexity] \label{prop:receipt-state-complexity}
For a route containing $U$ unique edges, the witness-commit mechanism uses $O(1)$ persistent receipt storage, while receipt opening and reservation resolution require $O(U)$ witness data and work.
\end{proposition}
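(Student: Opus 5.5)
The plan is to separate the persistent state written at admission from the transient data processed at resolution, and to bound each directly from the witness-commit mechanism of \Cref{sec-route-witness-commit}. First I will list what the contract writes to storage when it creates a routed receipt $\rho$. This consists of fixed-size metadata: sender, receiver, value $W$, deadline, the two role-finalization flags, and the receipt type. It also includes the single commitment $C_\rho=\mathsf{Hash}(\Lambda(\mathcal{P}))$. None of these fields depends on $U$ or $L(\mathcal{P})$, and the per-edge records of $\Lambda(\mathcal{P})$ are never persisted; they appear only in calldata and events. The persistent receipt storage is therefore $O(1)$. I will also note the edge updates $a_t(e)$ and $r_t(e)$. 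These modify existing per-edge slots of the trust-network state, not receipt state, so they do not count against the receipt bound. For a seed receipt, the single edge $e_{r,d}$ and its load $W$ are stored as metadata, which is again constant.

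Second, I will bound the cost of opening a receipt. The resolver supplies $\widehat{\Lambda}$, which has $U$ fixed-size records. The contract hashes its canonical encoding in $O(U)$ work, because each record has constant size, and compares the result with $C_\rho$. By \Cref{lem:witness-commit}, acceptance implies $\widehat{\Lambda}=\Lambda(\mathcal{P})$, so the opened table has exactly $U$ entries and the data requirement is $O(U)$. This witness is also sufficient: unlike admission, resolution does not need the paths, the index associations, or the node-order certificates.

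Third, I will bound the work of reservation resolution. For each of the $U$ entries, the contract applies one constant-time transition, as in \Cref{lem:capacity-accounting}. A release moves $\ell_e$ from $r_t(e)$ back to $a_t(e)$, and a negative resolution subtracts $\ell_e$ from both $r_t(e)$ and $c_t(e)$. Positive feedback adds one further constant-time update, to $e_{s,d}$. Updating the role flag is also $O(1)$. Summing these gives $O(U)$ work for sender feedback and for timeout resolution. Receiver-role resolution requires no witness and only $O(1)$ work.

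I expect the main obstacle to be definitional rather than technical. The argument must state precisely that the $O(1)$ bound refers to receipt-specific storage, and that the edge-capacity slots being read and written belong to graph state that exists regardless of the receipt. It must also justify that the hash is linear in $U$, which follows from the canonical fixed-width encoding assumed in \Cref{lem:witness-commit}.
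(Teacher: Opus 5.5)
Your proposal is correct and follows essentially the same argument as the paper. Constant-size metadata plus the single commitment $\mathsf{Hash}(\Lambda(\mathcal{P}))$ gives $O(1)$ receipt storage, hashing the $U$ fixed-size records and authenticating them via \Cref{lem:witness-commit} followed by one reservation update per record gives $O(U)$ data and work, and edge updates are charged to shared graph state rather than receipt state, as you note; the paper additionally remarks that positive feedback may \emph{create} $e_{s,d}$ (still $O(1)$ new state) and that seed receipts need no witness and only $O(1)$ resolution work, both of which fit directly into your accounting.
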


\begin{proof}
Under the witness-commit mechanism (\Cref{sec-route-witness-commit}), each routed receipt stores the hash commitment $\mathsf{Hash}(\Lambda(\mathcal{P}))$ and a fixed number of metadata fields. Their sizes are independent of $U$, giving $O(1)$ persistent receipt storage.

To resolve the sender role, the resolver re-supplies $\Lambda(\mathcal{P})$, the edge-load table verified at admission. The table contains $U$ fixed-size records, so the witness data has size $O(U)$. Checking its record count, ordering, encoding, and hash requires $O(U)$ work. By \Cref{lem:witness-commit}, a matching commitment authenticates the admitted loads, so the paths and their verification certificates need not be re-supplied or rechecked. \scname then scans the table to release or consume each recorded load, performing one reservation update per record. These updates modify shared edge state without storing per-receipt edge-load records. Positive sender feedback additionally updates or creates one direct edge, adding only $O(1)$ work and state. As a result, receipt opening and reservation resolution require $O(U)$ work.

Since $U\leq L(\mathcal{P})\leq K_{\max}H_{\max}$, resolution cost is bounded by protocol parameters and independent of the trust graph's size. A seed receipt identifies its single reserved edge and payment value through constant-size metadata. It therefore requires no route witness and uses $O(1)$ persistent receipt storage and $O(1)$ admission and resolution work.
\end{proof}

Together, \Cref{thm:cut-bounded-exposure,cor:sybil-invariance,thm:bounded-depletion,thm:route-soundness,thm:eventual-resolution} establish the security properties in \Cref{sec-property-goal}. \Cref{prop:verification-complexity,prop:receipt-state-complexity} further show that on-chain work is bounded independently of the trust graph's size and that each receipt adds only constant persistent state, meeting the efficiency goal.

\section{Additional Evaluation} \label[appendix]{sec-more-experiments}

This section evaluates topology-sensitive liquidity loss and transaction contention. Recall that the Ethereum-derived graph contains 1,055 vertices and 1,628 directed edges with the median capacity of 0.00494~ETH, while synthetic graphs have six directed edges per vertex and capacities sampled uniformly from $[10,100]$ (\Cref{sec-evaluation}).

\subsection{Reachability under Capacity Depletion}

We first examine how a fixed capacity-depletion budget translates into loss of network reachability. This complements the economic bound by measuring the topological consequences of removing capacity rather than the payment value required to acquire it.

\begin{figure*}[t]
    \centering
    \includegraphics[width=\textwidth]{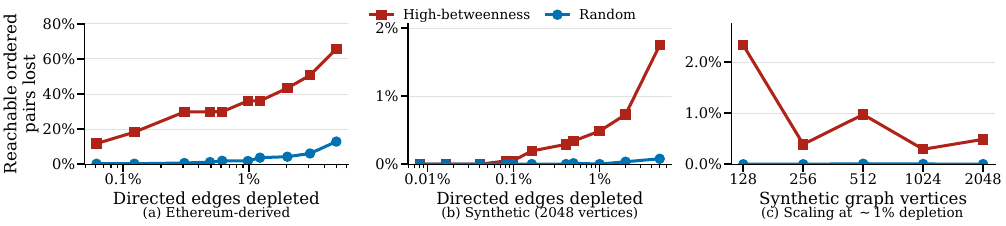}
    \caption{Topology sensitivity under equal unit-capacity depletion budgets. (a) The Ethereum-derived graph. (b) The 2,048-vertex synthetic graph. (c) Synthetic-size scaling after depleting approximately 1\% of directed edges. 
    }
    \label{fig:eval-topology}
\end{figure*}

Our evaluations cover both Ethereum-derived graph and synthetic graph settings. In particular, we assign one unit of capacity to every edge, rank directed edges by unweighted edge-betweenness centrality, and compare removing the highest-ranked edges with uniformly random edge removal under the same number of removed edges. We measure reachability as the number of ordered source--destination pairs connected by a directed path. Each random-removal evaluation involves 20 attempts, while targeted-removal evaluations are deterministic and involve only one attempt.

\Cref{fig:eval-topology} shows the results. The intact Ethereum-derived graph has 20,509 reachable ordered pairs. Removing only the highest-ranked edge eliminates 11.78\% of them, compared with 0.27\% on average under random removal. Removing five targeted edges eliminates 29.81\% versus 0.70\% under random removal. Removing 50 eliminates 50.69\% versus 6.12\%, and removing 81 eliminates 65.78\% versus 12.90\%. After removing approximately 1\% of directed edges, the Ethereum-derived topology loses 36.08\% of reachable pairs under targeting and 1.84\% under random removal, while a 2,048-vertex synthetic graph loses only 0.49\% and 0.0024\%, respectively. On the 2,048-vertex synthetic graph, even a 5\% targeted depletion loses only 1.75\% of reachable pairs, compared with 0.08\% under random removal. Across synthetic sizes, targeted removal at an approximately 1\% budget loses 0.29--2.33\% of reachable pairs, while random removal loses at most 0.01\%. These results show that the Ethereum-derived graph is unusually vulnerable and that targeted removal causes greater reachability loss than random removal across both graph families.

This study is a unit-capacity sensitivity test that removes monetary heterogeneity. Edge removal assumes that the attacker can acquire suitable ingress capacity and submit routes traversing the selected edges, while reachability loss is distinct from lost payment value. The much larger loss on the Ethereum-derived graph indicates that concentration in the current young ecosystem, rather than the mere presence of central edges, drives the most severe damage. Candidate defenses include criticality-aware path scoring, edge-disjoint or low-overlap routes, reserving less of a bridge edge per transaction, and monitoring central edges so honest participants can establish redundant capacity. We leave these as open design problems for future work.

\subsection{Contention under Stale Route Snapshots} \label{sec-eval-contention}

We next examine how delayed pathfinder snapshots interact with serialized on-chain admission. We vary offered load, demand distribution, snapshot delay, transaction value, and client retry behavior to separate stale-route failures from genuine capacity contention. The transaction-value study includes a separate controlled parallel graph with eight independent source--destination groups, each connected by eight parallel two-hop branches.

\begin{figure}[t]
    \centering
    \includegraphics[width=\columnwidth]{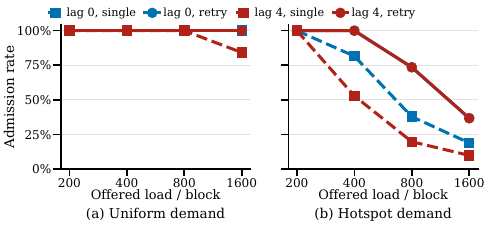}
    \caption{Admission under serialized block contention on the 2,048-vertex synthetic graph. Uniform demand spreads requests across the graph, while hotspot demand concentrates requests on shared branches. 
    }
    \label{fig:eval-contention}
\end{figure}

\begin{figure}[t]
    \centering
    \includegraphics[width=\columnwidth]{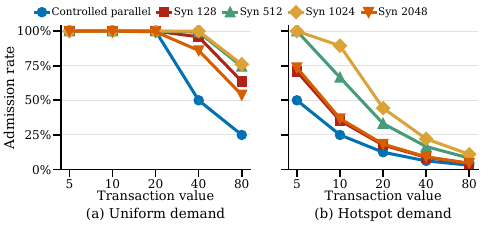}
    \caption{Admission versus transaction value at 1,600 offered requests per block and a four-block snapshot delay under refresh-and-retry. Uniform demand is shown on the left and hotspot demand on the right. 
    }
    \label{fig:eval-transaction-value}
\end{figure}

In the evaluation, requests in the same block use a common pathfinder snapshot, while contract admissions are serialized against current state and reservations remain locked for two blocks. The snapshot delay (or we call it indexer lag), is the gap between the contract state and the state visible to the pathfinder. It is used to simulate the synchronization delay between the on-chain state and the off-chain pathfinder in practice. In \emph{single-shot} mode, a client submits the route once. In \emph{refresh-and-retry} mode, it obtains a fresh route after a rejection and resubmits. Uniform demand spreads requests across source--destination groups, while hotspot demand concentrates them on shared branches. We sweep 200--1,600 offered requests per block, snapshot delays of 0--4 blocks, and transaction values of 5--80 capacity units.

\para{Stale routes and retry} (\Cref{fig:eval-contention}). On the 2,048-vertex synthetic graph, uniform demand at 1,600 offered requests per block and a four-block snapshot delay admits 84.2\% of requests with single-shot submission and 100\% with refresh-and-retry. Hotspot demand is more severe. At the same load and delay, admission rises from 9.9\% to 36.8\% with retry, a 26.9-percentage-point gain, while the remaining failures reflect genuine capacity contention rather than stale-route errors. At 200 offered requests per block, both policies admit all requests, showing that retry helps when concurrent demand approaches shared route capacity. 
These results indicate that high-contention scenarios might lead to admission failures, but retrying with a fresh route can significantly improve the success rate. 

\para{Transaction value} (\Cref{fig:eval-transaction-value}). Increasing the value reserved by each transaction consumes capacity more rapidly. At 1,600 offered requests per block, a four-block snapshot delay, and refresh-and-retry, uniform-demand admission on the 2,048-vertex synthetic graph decreases from 100.0\% at value 5 to 53.9\% at value 80. Under hotspot demand it decreases from 73.6\% to 4.6\% over the same sweep. The controlled parallel graph exhibits the same trend but is more sharply capacity-limited under hotspot demand, falling from 50.0\% to 3.1\%. Thus, retry mitigates stale-route failures but cannot overcome insufficient aggregate capacity, and larger transaction values expose this economic limit directly.

\section{Dataset Collection and Processing} \label[appendix]{sec-dataset-detail}

\para{Frozen real-world dataset.} We collect Ethereum mainnet blocks 24,339,871--25,664,244 inclusive, spanning the ERC-8004 deployment window beginning January~29, 2026 through the frozen August~2, 2026 cutoff. 
The dataset is used to trace interactions of clients, predicting payments performed by AI agents. 
Using Etherscan~\cite{etherscan} log and account APIs, the collector first retrieves 52,322 ERC-721 \texttt{Transfer} events from the ERC-8004 Identity Registry at \nolinkurl{0x8004A169FB4a3325136EB29fA0ceB6D2e539a432}. Ordered mint, transfer, and burn events reconstruct 52,322 ownership intervals. For each of 9,163 identified owners, we retrieve normal and internal transactions and keep an interaction only when the address owned at least one registered agent at the transaction's exact block and transaction position. We deduplicate repeated records by transaction hash, owner, sender, and receiver. 

The raw owner-interaction collection contains 491,079 transactions. We then apply an auditable sequence of filters. Valid-endpoint checking removes 0 records. Excluding registry, zero, and burn addresses removes 33,520. Requiring both endpoints to be agent owners removes 423,052 and leaves 34,507 owner-to-owner candidates. Known failed or reverted transactions remove 217, self-transfers remove 18,598, and zero, negative, or invalid values remove 3,165. No retained transaction has unknown execution status. The resulting directed graph aggregates the count and transferred value of all retained transactions from the same owner to the same owner. Each owner address is one vertex annotated with all ERC-8004 agents it controls, and an edge's initial available capacity is the aggregate transferred value in wei, where $10^{18}$ wei equals one ETH. \Cref{fig:eval-dataset} (right) shows the retained transactions after applying each filter rule. 

This initialization is an optimistic collection that treats every retained historical transfer as though its sender had positively resolved it under \sysname. We refer to the proxy as the \emph{Ethereum-derived owner-transfer graph}, or the \emph{Ethereum-derived graph} for short. We retain edge direction and exclude ERC-8004 feedback because such feedback is not reliably bound to an interaction and is itself susceptible to fictitious activity. 

\begin{figure}[t]
    \centering
    \includegraphics[width=\columnwidth]{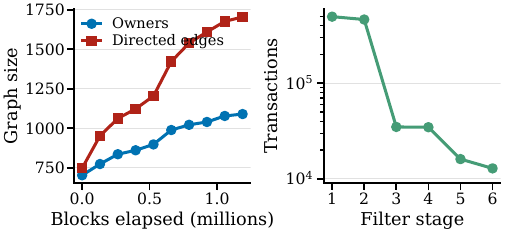}
    \caption{Ethereum-derived dataset. Left: the Ethereum-derived graph continues to grow across the collection window. Right: transactions retained after (1) valid-endpoint checking, (2) exclusion of identity/reputation registries and zero/burn addresses, (3) requiring both endpoints to be agent owners, (4) successful/non-reverted execution, (5) distinct endpoints with no self-transfers, and (6) positive, valid transaction values.}
    \label{fig:eval-dataset}
\end{figure}

\para{Ethereum-derived graph.} The final graph contains 1,055 owner vertices, 1,628 directed edges, and 12,527 retained transactions totaling 169.294~ETH. Grouped by ordered owner pair and chronological transaction order, 10,899 of the 12,527 retained transfers (87.0\%) occur after an earlier transfer by the same pair, while the remaining 1,628 are the first retained transfer for their pair. It has 137 weakly connected components (largest: 213 vertices), 752 strongly connected components (largest: 105), and 20,509 reachable ordered pairs, only 1.844\% of all ordered pairs. The mean in- and out-degree is 1.54, the 95th percentile is 5, and 41.4\% of directed edges have a reverse edge. The median edge aggregates two transactions and 0.00494~ETH, while its 95th and 99th percentile capacities are 0.254 and 1.609~ETH, respectively. Of the identified owners, 611 (57.9\%) control multiple agents. Among retained transactions, 71.6\% touch at least one such owner. \Cref{fig:eval-dataset} (left) shows that the graph grows rather than arising from one short burst, but also makes clear that ERC-8004 is young and the Ethereum-derived topology is sparse.

\end{document}